\newtheorem{theo}{Theorem}[section]
\newtheorem{defi}[theo]{Definition}
\newtheorem{prop}[theo]{Proposition}
\newtheorem{coro}[theo]{Corollary}
\newtheorem{lemma}[theo]{Lemma}
\DeclareMathOperator{\rank}{rank}
\DeclareMathOperator{\Ker}{Ker}
\DeclareMathOperator{\atanh}{atanh}
\newcommand{\R}{\mathbb{R}}
\newcommand{\Z}{\mathbb{Z}}
\newcommand{\N}{\mathbb{N}}
\DeclareMathOperator{\Int}{Int}
\DeclareMathOperator{\cov}{cov}
\newcommand{\rcov}{{\rho_{\cov}}}
\newcommand{\rcovbar}{{\bar{\rho}_{\cov}}}
\newcommand{\error}{x}
\newcommand{\syndrome}{\sigma}
\newcommand{\vq}{V_Q}
\newcommand{\vc}{V_C}
\newcommand{\cE}{{\cal E}}
\DeclareMathOperator{\inj}{inj}
\DeclareMathOperator{\Vol}{Vol}
\newcommand{\Rinj}{{R_{\inj}}}
\newcommand{\cM}{{\cal M}}
\newcommand{\cMi}{{{\cal M}_i}}
\newcommand{\HypD}{\mathbb{H}^D}
\newcommand{\TD}{{\mathbb T}_L^D} 
\DeclareMathOperator{\TC}{TC}
\newcommand{\cA}{{\cal A}}
\title{Toward a Union-Find decoder for quantum LDPC codes}
\author[1]{Nicolas Delfosse}
\author[2]{Vivien Londe}
\author[1]{Michael Beverland}
\affil[1]{Microsoft Quantum and Microsoft Research, Redmond, USA}
\affil[2]{Microsoft France, Issy-les-Moulineaux, Paris}
\begin{document}

\maketitle

\begin{abstract}
Quantum LDPC codes are a promising direction for low overhead quantum computing. 
In this paper, we propose a generalization of the Union-Find decoder as a decoder for quantum LDPC codes. 
We prove that this decoder corrects all errors with weight up to $An^\alpha$ for some $A, \alpha > 0$ for different classes of quantum LDPC codes such as toric codes and hyperbolic codes in any dimension $D \geq 3$ and
quantum expander codes.
To prove this result, we introduce a notion of covering radius which measures the spread of an error from its syndrome.
We believe this notion could find application beyond the decoding problem.
We also perform numerical simulations, which show that our Union-Find decoder outperforms the belief propagation decoder in the low error rate regime in the case of a quantum LDPC code with length 3600.
\end{abstract}

\section{Introduction}

Classical Low Density Parity Check (LDPC) codes \cite{gallager1962LDPC} are ubiquitous in modern communication and information processing. 
They achieve a high rate of information storage, a large minimum distance and are equipped with a linear-time decoder whose performance is close to that of the maximum likelihood decoder: the Belief Propagation (BP) decoder.
Quantum LDPC codes could bring similar advantages to quantum computing.
Moreover, quantum LDPC codes are defined by low weight checks which makes them easier to implement in practice than more general quantum codes. This is because measurements of high-weight checks are likely too noisy to be useful with quantum hardware. 
It is possible to apply the BP decoder in the quantum setting,  although unfortunately it does not tend to perform well for quantum LDPC codes \cite{poulin2008iterative}.

\medskip
In this work, we propose a generalization of the Union-Find decoder \cite{delfosse2017UF_decoder} to decode quantum LDPC codes. 
We analyze its performance for different classes of quantum LDPC codes including toric codes, hyperbolic codes, and expander codes.
toric codes, with local checks on a $D$-dimensional grid, seem easier to implement, while hyperbolic and expander codes can offer a higher encoding rate.

\medskip
The generalization of the BP decoder to the quantum setting was first considered by Mackay, Mitchison and McFadden in 2004 \cite{mackay2004sparse}.
The authors emphasize the advantage of high-rate quantum LDPC codes over other quantum error correction schemes such as quantum Reed-Muller codes \cite{steane1999quantum} or quantum Bose–Chaudhuri–Hocquenghem (BCH) codes \cite{grassl1999QBCH}.
They also note that the Tanner graph contains an enormous number of short cycles which degrade the performance of the BP decoder.
Poulin and Chung explored different modifications of this decoder and observed that the decoding remains the main point of failure in LDPC quantum error correction schemes \cite{poulin2008iterative}.
To improve the performance of the BP decoder, one could also consider moving away from standard quantum LDPC codes with asymmetric codes obtained by combining BCH codes and LDPC codes \cite{sarvepalli2009asymmetric}.

\medskip
In the classical setting, Sipser and Spielman observed that the expansion property of the Tanner graph of LDPC codes can be exploited to design an efficient decoder for expander codes \cite{sipser1996expander}.
This algorithm was generalized to the quantum setting by Leverrier, Tillich and Zémor \cite{leverrier2015quantum_expander_codes}, providing an alternative to the BP decoder for quantum LDPC codes in the special case of quantum expander codes.
The basic idea of this decoding algorithm is to look for an error in a constant size neighborhood of a qubit such that correcting this error would reduce the weight of the syndrome.
This local correction is repeated until the syndrome becomes trivial.
The authors proved that such a local error always exists if the Tanner graph has a sufficiently large expansion.
However, if the expansion is not large enough, the decoder might get stuck with a non-trivial syndrome that cannot be reduced locally.
Hasting's decoder for 4D hyperbolic codes, which removes errors in a hyperbolic space by locally contracting them, has a similar flavor \cite{hastings2013decoding}.
These strategies rely on Tanner graphs with large expansion are are therefore unlikely to perform well with toric codes or with quantum LDPC codes with moderate expansion.

\medskip
The decoding of 2D surface codes and toric codes has been extensively studied and several decoders achieve a satisfying performance such as the Minimum Weight Perfect Matching decoder \cite{dennis2002topological} or the Union-Find decoder \cite{delfosse2017UF_decoder}.
Different decoders have been considered for toric codes in dimension $D \geq 3$ such as the Renormalization Group (RG) decoder \cite{bravyi2011analytic} or decoders based on local rules \cite{breuckmann2016local, kubica2019cellular}.
To emphasize the difficulty to generalize the RG decoder to quantum LDPC codes, let us review the basic principle of this decoder in the case of a 2D toric code over an $L \times L$ square lattice.
During the first step of the RG decoder, the lattice is decomposed into disjoint patches with size $\ell \times \ell$ and the error inside each patch is estimated locally.
Then, incident patches are merged together to form $2\ell \times 2\ell$ patches and the error inside a merged patch is derived from the results of the first step.
The RG decoder identifies the error by repeating this procedure $O(\log L)$ times until all the patches are merged.
This strategy applies to all local codes defined on cubic lattices but it is unclear to us how to build these self-similar patches for hyperbolic codes or for general quantum LDPC codes.
The toric code decoder of \cite{kubica2019cellular} has the advantage of being based on a local rule. 
Like the quantum expander code decoder, it determines a local correction but the local correction is not chosen to minimize the syndrome weight. Instead, it moves the syndrome in a fixed direction of the cubic lattice supporting the code. 
We do not know how to adapt this notion of direction, which seems essential, to general quantum LDPC codes.

\medskip
In this article, we describe a generalization of the Union-Find decoder \cite{delfosse2017UF_decoder} to quantum LDPC codes.
For simplicity, we consider the setting of perfect syndrome extraction. 
The Union-Find decoder was originally proposed for two-dimensional topological codes.
Our variant of the Union-Find decoder is very general and it can be used to decode any stabilizer code.
We prove that it achieves good performance for locally testable quantum codes, including quantum expander codes \cite{leverrier2015quantum_expander_codes}, $D$-dimensional hyperbolic topological codes such as the 4D hyperbolic codes of Guth and Lubotsky \cite{guth20144D_codes}, and also $D$-dimensional toric codes in any dimension $D \geq 3$.
Precisely, for these families of codes, we show that the Union-Find decoder is capable of correcting all errors with weight up to $A n^\alpha$ for some positive constants $A$ and $\alpha$, where $n$ is the length of the code.
Unlike the quantum expander code decoder, our approach is not limited to codes with high expansion.
It is satisfying to have a unified decoding strategy for quantum LDPC codes. 
We believe that this work makes an important step toward a flexible, low-complexity, fault-tolerant decoder that performs well with most quantum LDPC codes.

\medskip
To analyze the performance of the decoder, we introduce the notion of the covering radius of an error, which measures the spread of the error from its syndrome.
We show that our generalization of the Union-Find decoder successfully corrects all errors with small covering radius. 
Then, by establishing bounds on the covering radius of errors for different classes of quantum LDPC codes, we prove that the Union-Find decoder succeeds.
We believe that this notion of the covering radius could find applications beyond the decoding problem.

\medskip
In Section~\ref{sec:background}, we review the three classes of quantum LDPC codes considered in this article.
Our version of the Union-Find decoder is described in Section~\ref{algo:uf_decoder} and we demonstrate in Section~\ref{sec:uf_success} that our decoder successfully corrects errors with small covering radius.
Section~\ref{sec:hyp_codes} and Section~\ref{sec:loc_testable_codes} provide a bound on the covering radius of errors in hyperbolic codes and locally testable codes, showing that the decoder performs well for these codes.
Toric codes are considered in Section~\ref{sec:toric_codes}. To prove that the Union-Find decoder performs well for toric codes we use a bound on the covering radius in combination with the disjointness property of logical errors \cite{jochym2018disjointness, webster2020universal}.
In Section~\ref{sec:numerics}, we perform numerical simulations which show that our Union-Find decoder outperforms the BP decoder in the low error rate regime in the case of a quantum LDPC code with length 3600.

\section{Background and Notation} \label{sec:background}

\subsection{Graphs}

A {\em graph} is a pair of sets $G = (V, E)$ where $V$ is a set called the vertex set and $E$ is a set of pairs of elements of $V$ called the edge set.
We assume that edges are not oriented, that is $\{u, v\} = \{v, u\} \in E$.
The {\em degree} of a vertex $u \in V$ in $G$ is the number of edges incident to $u$, that is the number of edges containing $u$.
The degree of the graph $G$ is the maximum degree of a vertex of $G$.

\medskip
A graph is equipped with a natural notion of distance induced by the edge set. 
The distance $d(u, v)$ between two vertices $u$ and $v$ in $G$ is the minimum number of edges of a path connecting them.
If no such path exists, we define $d(u, v) = +\infty$.
The ball of radius $R$ centered at $u \in V$, denoted $B_G(u, R)$ is the set of vertices $v$ of $G$ such that $d(u, v) \leq R$.
The set of vertices at distance exactly $R$ from $u$, denoted $S_G(u, R)$ is called the sphere of radius $R$ centered at $u \in V$.
More generally, for a subset of vertices $A \subseteq V$, the set
$
B_G(A, R)
$
is the set of nodes at distance up to $R$ from any vertex of $A$.

\medskip
The {\em connected component} of a vertex $u$ in $G$ is the set of vertices (also called nodes) that can be reached from $u$ by a path of edges of $G$.
The connected components of $G$ form a partition of the vertex set.
We will also consider the connected components of a subset $A \subseteq V$. In this case, the connected component of $u \in A$ is the set of vertices of $A$ that can be reached from $u$ by a path included in $A$.

\medskip
Given $A \subseteq V$, the {\em interior} of the set $A$, denoted $\Int(A)$, is defined to be the set of vertices $u$ of $A$ such that all the neighbors of $u$ are also included in $A$, {\em i.e.} $u \in \Int(A)$ iff $B_G(u, 1) \subseteq A$.

\subsection{Manifolds}
\label{sec:manifolds}

In this section, we review the basic properties of Euclidean and hyperbolic manifolds, upon which Kitaev's topological LDPC codes are constructed \cite{kitaev2003top_codes}. 
For simplicity, we focus on closed manifolds. 
We choose the quotient construction of closed manifolds because it leads to more intuitive definitions of the metric and the injectivity radius which play a crucial role in this article. We will then consider well-behaved cellulations of these manifolds which can be used to build topological codes.

\medskip 
Let us first review the construction of a $D$-dimensional torus as a quotient space. 
The $D$-dimensional Euclidean space is denoted $\R^D$. 
It is equipped with the standard Euclidean metric. 
Let us consider a discrete subgroup of the isometries of $\R^D$, namely the group of translations $T$ of the form $x \mapsto x + t$ where $t \in \Z^D$. 
The quotient space $\R^D / T$, defined by identifying all the points $\tau(x)$ with $\tau \in T$ with the point $x$, is a $D$ dimensional torus. 

\medskip 
Any closed Euclidean (respectively hyperbolic) $D$-manifold can be obtained as a quotient $\cM = U / G$ where $U = \R^D$ (respectively $U = \HypD$) and where $G$ is a discrete subgroup of the isometries of $U$ \cite{hatcher2005algebraic_top, jones1987complex_functions}.
The space $\HypD$ is the upper half-space $\{ x = (x_1, \dots, x_D) \in \R^D \ | \ x_D > 0 \}$ equipped with a hyperbolic metric. The definition of the hyperbolic metric and basic properties of $\HypD$ can be found in \cite{jones1987complex_functions}.
The quotient map 
$
\pi: U \rightarrow \cM = U / G
$
is a surjective continuous map. Moreover, it induces a local homeomorphism between the space $U$ and the manifold $\cM$. This is because for any point $p$ in $U$, for a sufficiently small radius $R$, the restriction of the map $\pi$ to the the ball $B_U(p, R)$ in $U$ is a bijection\footnote{By $B_U(p, R)$ we mean the set of points of $U$ at distance less or equal than $R$ from $p$.}.
The injectivity radius measures the maximum radius of this local isomorphism. Precisely, the {\em injectivity radius} of $\cM$ is defined to be
\begin{align} \label{eq:Rinj}
\Rinj = \sup \{ R \in \R \ | \ \pi_{|B_U(p, R)} \text{ is injective for all } p \in U \},
\end{align}
where $\pi_{|B_U(p, R)}$ is the restriction of $\pi$ to the ball with radius $R$ centered in $p$ in the space $U = \R^D$ or $\HypD$.
The manifold $\cM$ is equipped with the metric induced by the metric of $U$. This allows us to compute the volume of an $i$-dimensional region $c$ of the manifold $\cM$ that we denote $\Vol_i(c)$.
In the case of the torus considered above, the quotient map is $\pi: (x_1,x_2, \dots x_D) \rightarrow (x_1 \mod 1,x_2 \mod 1, \dots x_D \mod 1)$, the injectivity radius is $0.5$, and volumes are calculated using the standard Euclidean metric.

\medskip
Kitaev's topological codes can be defined from a {\em cellulation} of a closed manifold $\cM = U / G$, that is a partition of the manifolds into polytopes (see \cite{hatcher2005algebraic_top} for a formal definition). Let $\cM_i$ be the set of $i$-cells of the cellulation for $i=0, \dots, D$.
In order to avoid pathological cellulations, we assume that all $i$-cells of $\cM$ have roughly the same volume, {\em i.e.} for all $i=1, \dots, D$ there exist constants $v_i, v_i' > 0$ such that
\begin{align} \label{eq:cells_vol}
v_i \leq \Vol_i(c) \leq v_i',
\end{align}
for every $i$-cell $c \in \cMi$. This condition also guarantees that the cellulation is finite because the volume of the manifold is finite.
All cellulations considered in this article satisfy Eq.~\eqref{eq:cells_vol}.

\medskip
The space of $i$-{\em chains}, denoted $C_i$, is the $\Z_2$-linear space whose basis is indexed by the $i$-cells of $\cM$. A $i$-chain is a sum $\sum_{c \in \cMi} \lambda_c c$ with $\lambda_c \in \Z_2$. Any $i$-chain can be interpreted as a subset of $i$-cells (given by its support) and {\em vice versa}.

\medskip
For all $i=1, \dots, D$, the {\em boundary map} is a $\Z_2$-linear map from $C_i$ to $C_{i-1}$.
By linearity, it is sufficient to define $\partial(c)$. 
The image $\partial(c)$ of $c \in C_i$ is the $(i-1)$-chain whose support is the set of $(i-1)$-cells included in $c$.

\medskip
In the case of the $D$-dimensional torus $\R^D / \Z^D$ considered previously, we can build a cellulation with $L^D$ cubic cells of length $1/L$. 
The 0-cells  are vertices, 1-cells are edges, 2-cells are squares, 3-cells are cubes, {\em etc}. Naturally, in this example, the boundary of a 3-cell contains six 2-cells and the boundary of a 2-cell contains four 1-cells.


\subsection{Quantum error correction}
\label{sec:qec}

The Union-Find decoding algorithm we introduce in Sec.~\ref{sec:uf_success} applies to a broad class of quantum error correcting codes known as Calderbank Shor Steane (CSS) codes~\cite{calderbank_good_1996,steane_simple_1996}.
A {\em CSS code} is fully specified by a pair of binary {\em parity-check matrices} $H_X \in M_{r_X \times n}(\Z_2)$ and $H_Z \in M_{r_Z \times n}(\Z_2)$ such that $H_X H_Z^T = 0$.
This pair of matrices defines a quantum code which encodes $k$ logical qubits into $n$ qubits where $k = n - \rank(H_X) - \rank(H_Z)$.

\medskip
An error for a CSS code with length $n$ can be represented as a pair of binary vectors $e = (e_X, e_Z) \in \Z_2^n \times \Z_2^n$.
The two components of the error can be corrected independently with the same procedure. In what follows, we focus on the correction of the $X$ part $e_X$ of the error.
When quantum data is encoded using a CSS code with parity-check matrices $H_X$ and $H_Z$, there is a set of errors, called {\em stabilizers}, that have no effect on the encoded data. An $X$ error $e_X$ is a stabilizer iff it belongs to the row space of $H_X$.

\medskip
The {\em syndrome} of the error $e_X$ is the vector $\sigma_X = H_Z e_X^T \in \Z_2^{r_Z}$. Each row of the matrix $H_Z$ defines a $Z$ check providing one syndrome bit.
In this work, we assume that the syndrome can be measured perfectly. The goal of the decoder is to identify the error $e_X$ given its syndrome. 
The decoder succeeds if the estimation $\hat e_X$ it returns is equivalent to $e_X$ up to a stabilizer, {\em i.e.} if $e_X + \hat e_X$ is a stabilizer.

\medskip
Based on the equivalence of errors up to stabilizers, the $X$ {\em minimum distance} of a CSS code, denoted $d_X$, is defined to be the minimum weight of an $X$ error $e_X$ with trivial syndrome which is not a stabilizer. The $Z$ minimum distance $d_Z$ is defined similarly by swapping the roles of $X$ and $Z$.
The minimum distance of a CSS code is defined as $d = \min\{d_X, d_Z\}$.
The parameters of a CSS code are denoted $[[n, k, d]]$.

\subsection{Quantum LDPC codes}
\label{sec:LDPC-quantum-codes}

A CSS stabilizer code is said to be a Low Density Parity-Check (LDPC) quantum code if both the parity-check matrices $H_X$ and $H_Z$ are sparse.
LDPC quantum codes are promising for practical purposes because their stabilizer measurements are of bounded weight.
In the remainder of this section we define some important families of LDPC quantum codes for which we will later prove bounds and numerically simulate with the Union-Find decoder.

\subsubsection{Topological codes}

Topological codes are among the most well-known LDPC quantum codes, and are defined in terms of qubits placed on manifolds as described in Sec.~\ref{sec:manifolds}.

\medskip
First, let us review Kitaev's construction of {\em topological codes} {\cite{kitaev2003top_codes}}. Let $\cM$ be a closed $D$-manifold and let $\cMi$ be the set of $i$-cells of a finite cellulation of the manifold for $i=0, \dots, D$. 
Kitaev's construction associates a CSS code with the $m$-cells of the cellulation for some $1 \leq m \leq D-1$. A qubit is placed on each $m$-cell. For each $(m+1)$-cell $c$, we define an $X$ stabilizer whose support is given by the qubits on the boundary of $c$. Similarly, a $Z$ check is associated with each $(m-1)$-cell $c$ and its support is the set of $m$-cells $c'$ such that $c$ belongs to the boundary on $c'$.
We will consider two classes: the \emph{$D$-dimensional toric codes} built from cubic cellulations of the $D$-dimensional torus, and \emph{hyperbolic codes} built from finite cellulations of hyperbolic manifolds.

\medskip
The major advantage of toric codes is that they can be naturally embedded in a cubic lattice and they achieve a large minimum distance. 
However, they come with a poor encoding rate. The parameters of $D$-dimensional toric codes are constrained by the tradeoff $k d^{2/(D-1)} \leq c n$ for some constant $c$ \cite{bravyi2010tradeoffs}.

\medskip
Hyperbolic codes in dimension $D>2$ are especially interesting because they can achieve a non-vanishing rate $k/n$ and a large minimum distance $d = n^a$ for some $a>0$ \cite{guth20144D_codes}. These parameters are not achievable with two-dimensional hyperbolic codes \cite{delfosse2013tradeoffs} nor with toric codes in any dimension.

\subsubsection{Locally-testable codes}

Locally-testable quantum codes, introduced in \cite{aharonov2015QLT}, are of interest due to their resilience against faulty measurements which can lead to single-shot error correcting properties \cite{campbell2019single_shot}.
In this section we recall the definition of quantum locally testable codes through the soundness property formalized in \cite{eldar2017local}. 
The local testability of different families of quantum codes is studied in \cite{hastings2016quantum} and \cite{leverrier2019towards}.

\medskip
An $X$ error $e_X$ with syndrome $\syndrome_X$ is said to be a {\em reduced error} if it is a minimum weight error with syndrome $\syndrome_X$. 
The soundness property that we introduce now is the defining property of locally testable CSS codes.
A CSS code is said to have {\em soundness} $(\alpha, m)$ if
\begin{align} \label{eq:soundness}
|\syndrome_X(e_X)| \leq \alpha |e_X|
\end{align}
for all reduced errors $e_X$ with weight $|e_X| \leq m$. 
In this article, we say that a family of CSS codes with length $n \rightarrow +\infty$ has good soundness -- or equivalently that it is locally testable -- if it has soundness $(\alpha, m)$ where $\alpha$ is a constant independent of $n$ and $m$ grows at least polynomially with $n$.

\section{Union-Find decoder for quantum LDPC codes} \label{sec:decoder}

In this section, we propose a Union-Find decoder for the correction of $X$ errors in arbitrary CSS codes, including quantum LDPC codes. The same strategy can be used to correct $Z$ errors by swapping the roles of $X$ and $Z$. We describe the algorithm in the Tanner graph \cite{tanner1981tanner_graph} associated with $Z$ checks. We focus on the description of the decoder and we ignore for now the implementation details and the complexity of the algorithm. 
In later sections we prove and demonstrate numerically that the Union-Find decoder defined here performs well for some important classes of LDPC quantum codes including topological and locally-testable codes.

\begin{figure}
\centering
\includegraphics[scale=.5]{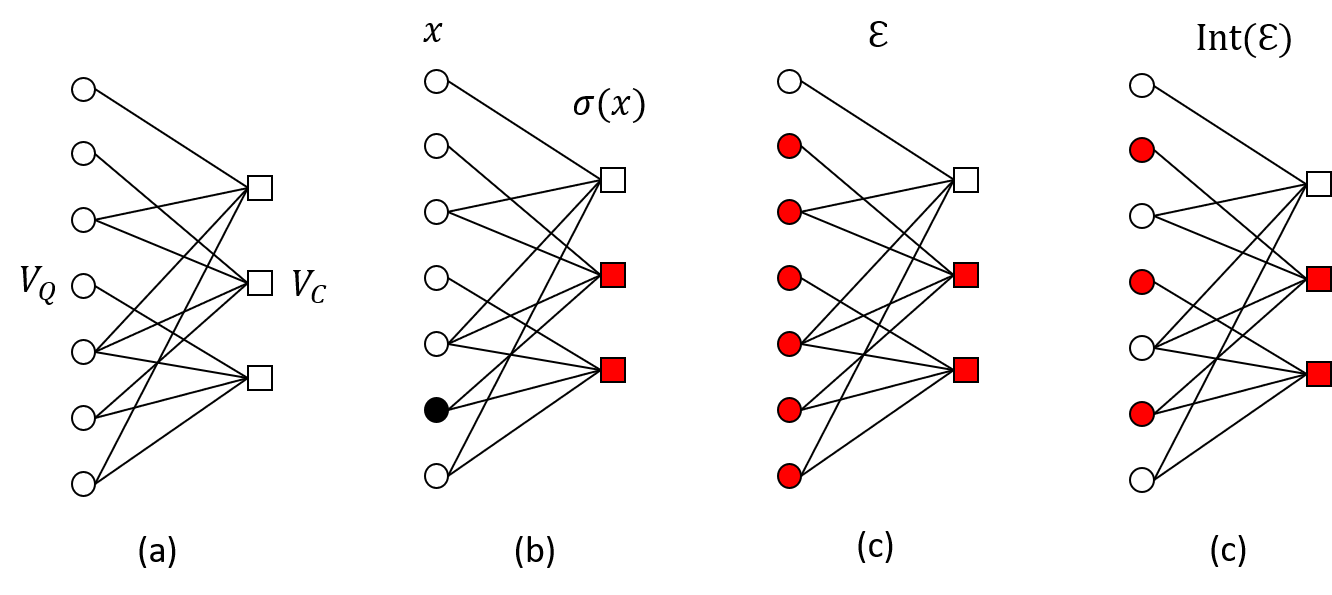}
\caption{
(a) Tanner graph of the Steane code. Circles are qubit nodes and squares are $Z$ check nodes. 
(b) A single-qubit error $\error$ supported on the black node and its syndrome $\syndrome$ supported on red nodes. 
(c) The set $\cE$ grown by Algorithm~\ref{algo:uf_decoder} after one growth steps. Initially $\cE$ contains the two red syndrome nodes. At the first growth steps, their neighbors are added to $\cE$.
(d) The interior of the set $\cE$.
The set $\cE$ is valid for the input syndrome. 
However, the decoder does not always succeed in this case because two non-equivalent valid corrections exist inside $\cE$.
}
\label{fig:tanner_graph}
\end{figure}

\medskip
The Tanner graph $T = (V, E)$ for $Z$ checks is a bipartite graph with vertex set $V = \vq \cup \vc$ where $\vq = \{q_1, \dots, q_n\}$ corresponds to the qubit set and $\vc = \{c_1, \dots, c_{r_Z}\}$ corresponds to the set of $Z$ checks. 
The nodes $q_i$ and $c_j$ are connected by an edge iff the check corresponding to $c_j$ acts non-trivially on qubit corresponding to $q_i$. 
The Tanner graph of the Steane code is represented in Fig.~\ref{fig:tanner_graph}(a).
 
\medskip
Here we find it convenient to represent an $X$ error $e_X \in \Z_2^n$ as the set of data nodes $\error \subseteq \vq$ that correspond to qubits in the support of $e_X$.
Similarly, we describe the syndrome $\syndrome(\error)$ as the set of check nodes that are incident to an odd number of nodes of $\error$.
In this article, we assume that the syndrome can be measured perfectly.

\medskip
The goal of a decoder is to identify the error $\error$ given its syndrome  $\syndrome(\error)$.
The basic idea of the Union-Find decoder is to reduce the correction of $X$ errors to the simpler problem of correcting an erasure. Recall that in the erasure channel model only erased qubits can suffer from an error. The location of the erased qubits is known and it is sufficient to identify a correction inside the erased set.

\medskip
The Union-Find decoder, described in Algorithm~\ref{algo:uf_decoder} works in two steps: First, we grow clusters of nodes around the non-trivial syndrome nodes, then we look for a correction inside these clusters. For a low weight error, if the grown clusters are small enough the correction is guaranteed to succeed. 
See Fig.~\ref{fig:tanner_graph} for an illustrative example.
Algorithm~\ref{algo:uf_decoder} provides the pseudo-code for our quantum LDPC code decoder.

\begin{algorithm}
	\caption{Union-Find decoder for quantum LDPC codes}
	\label{algo:uf_decoder}
	\vspace{0.2em}
	\textbf{Input:} The syndrome $\syndrome = \syndrome(\error) \subseteq \vc$ of an error $x \subseteq \vq$. \\
	\textbf{Output:} An estimation $\tilde \error \subseteq \vq$ of $\error$.
	\begin{algorithmic}[1]
		\STATE Initialize $\cE$ as $\cE = \syndrome$.
		\STATE While there exists an invalid connected component in $\cE$ do:
		\STATE \hspace{1cm} Add the neighbors of $\cE$ to $\cE$, that is $\cE \leftarrow B_T(\cE, 1)$.
		\STATE For each connected component $\cE_i$ of $\cE$ do:
		\STATE \hspace{1cm} Find a valid correction $\tilde \error_i \subseteq \Int(\cE_i)$ such that $\syndrome(\tilde \error_i) = \syndrome \cap \cE_i$.
		\STATE Return $\tilde \error = \cup_i \tilde \error_i$.
	\end{algorithmic}
\end{algorithm}

\medskip
A set $\cE \subseteq V$ of vertices of the Tanner graph is said to be {\em valid} for the syndrome $\syndrome$ if there exists an error $\tilde x \subseteq \vq \cap \Int(\cE)$ such that $\syndrome(\tilde x) = \syndrome \cap \cE$.
The condition $\tilde x \subseteq \vq \cap \Int(\cE)$ ensures that the syndrome $\syndrome(\tilde x)$ of this error is a subset of the set of check nodes of $\cE$.
If $\cE$ is valid for the syndrome $\syndrome$, an error $\tilde \error$ that satisfies the validity condition above is said to be a {\em valid correction} in $\cE$ for the syndrome $\syndrome$.
The following Lemma proves that the validity of a set depends on its connected components. 
Moreover, the decoder can compute a valid correction $\tilde \error$ by working independently on the connected components of $\cE$.

\begin{lemma} \label{lemma:validity_components}
Let $\cE \subseteq V$ and denote $\cE = \cE_1 \cup \dots \cup \cE_m$ the decomposition of $\cE$ into its connected components.
Then $\cE$ is valid iff all of its connected components are valid.
Moreover, consider an error $\tilde \error \subset \cE$.
Then $\tilde \error$ is a valid correction in $\cE$
iff
for all $i=1, \dots, m$, $\tilde \error \cap \cE_i$ is a valid correction in $\cE_i$.
\end{lemma}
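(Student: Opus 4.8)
The plan is to reduce both assertions to two structural facts: the interior operation and the syndrome map are each compatible with the decomposition of $\cE$ into connected components. The starting observation is that there are no edges of the Tanner graph $T$ joining two distinct components $\cE_i$ and $\cE_j$ — if $\{u,v\}\in E$ with $u\in\cE_i$ and $v\in\cE$, then $v$ is reachable from $u$ by a length-one path inside $\cE$, so $v$ lies in the same component as $u$. From this I would deduce that for $u\in\cE_i$ one has $B_T(u,1)\subseteq\cE$ if and only if $B_T(u,1)\subseteq\cE_i$, and hence
\begin{align}
\Int(\cE)=\bigsqcup_{i=1}^m\Int(\cE_i), \qquad \vq\cap\Int(\cE)=\bigsqcup_{i=1}^m\bigl(\vq\cap\Int(\cE_i)\bigr).
\end{align}
Since the $\cE_i$ partition $\cE$, we also have $\syndrome\cap\cE=\bigsqcup_i(\syndrome\cap\cE_i)$.

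Next I would establish that syndromes of errors supported in $\Int(\cE)$ are ``localized'' to the components. Given an error $\tilde\error\subseteq\vq\cap\Int(\cE)$, set $\tilde\error_i=\tilde\error\cap\cE_i$, so that $\tilde\error=\bigsqcup_i\tilde\error_i$ and $\tilde\error_i\subseteq\vq\cap\Int(\cE_i)$. Any check node $c$ incident to some $q\in\tilde\error$ satisfies $c\in B_T(q,1)\subseteq\cE$ because $q\in\Int(\cE)$, hence $c$ lies in the same component as $q$; therefore every check node incident to $\tilde\error_i$ lies in $\cE_i$, and no check node is incident to nodes of two different $\tilde\error_i$. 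It follows that
\begin{align}
\syndrome(\tilde\error)=\bigsqcup_{i=1}^m\syndrome(\tilde\error_i), \qquad \syndrome(\tilde\error_i)\subseteq\cE_i.
\end{align}

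With these two decompositions in hand, the rest is bookkeeping with disjoint unions. For the ``if'' direction, given valid corrections $\tilde\error_i\subseteq\vq\cap\Int(\cE_i)$ with $\syndrome(\tilde\error_i)=\syndrome\cap\cE_i$, the set $\tilde\error=\bigsqcup_i\tilde\error_i$ lies in $\vq\cap\Int(\cE)$ and satisfies $\syndrome(\tilde\error)=\bigsqcup_i\syndrome(\tilde\error_i)=\bigsqcup_i(\syndrome\cap\cE_i)=\syndrome\cap\cE$, so $\cE$ is valid. For the ``only if'' direction, a valid correction $\tilde\error$ in $\cE$ decomposes as $\tilde\error_i=\tilde\error\cap\cE_i$ with $\bigsqcup_i\syndrome(\tilde\error_i)=\syndrome(\tilde\error)=\syndrome\cap\cE=\bigsqcup_i(\syndrome\cap\cE_i)$; intersecting with $\cE_i$ and using $\syndrome(\tilde\error_i)\subseteq\cE_i$ gives $\syndrome(\tilde\error_i)=\syndrome\cap\cE_i$, so each $\cE_i$ is valid with valid correction $\tilde\error_i$. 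The ``moreover'' statement is the same computation, now read off at the level of the correspondence $\tilde\error\mapsto(\tilde\error\cap\cE_i)_i$: for an error $\tilde\error\subseteq\cE$, the condition $\tilde\error\subseteq\vq\cap\Int(\cE)$ is equivalent to $\tilde\error\cap\cE_i\subseteq\vq\cap\Int(\cE_i)$ for all $i$, and $\syndrome(\tilde\error)=\syndrome\cap\cE$ is equivalent to $\syndrome(\tilde\error\cap\cE_i)=\syndrome\cap\cE_i$ for all $i$, which is precisely the claim.

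I do not anticipate a genuine obstacle here; the one point requiring care is that the interior is defined relative to the ambient Tanner graph rather than to an induced subgraph, so one must explicitly invoke the absence of edges between components in order to identify $\Int(\cE)\cap\cE_i$ with the interior of $\cE_i$ taken on its own. Once that identification is in place the argument is purely combinatorial.
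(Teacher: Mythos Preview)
Your proof is correct and follows essentially the same approach as the paper: both rely on the absence of edges between distinct components to show that the interior decomposes as $\Int(\cE)=\bigsqcup_i\Int(\cE_i)$ and that syndromes of errors in $\Int(\cE)$ localize to the components. Your version is more explicit and carefully structured than the paper's terse argument, but the underlying idea is the same.
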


\begin{proof}
It suffices to prove the equivalence:
$\tilde \error$ is a valid correction in $\cE$
iff
for all $i$, $\tilde \error \cap \cE_i$ is a valid correction in $\cE_i$.

Assume $\tilde \error$ is a valid correction in $\cE$ and let $\tilde \error_i = \tilde \error \cap \cE_i$.
Since $\tilde \error \subseteq \Int(\cE)$, we also have $\tilde \error_i \subseteq \Int(\cE_i)$.
Consider $\syndrome_i = \syndrome \cap \cE_i$.
Let $\syndrome = \syndrome(\tilde \error)$ and consider 
$\syndrome_i = \syndrome \cap \cE_i$.
By definition of the connected components, the only nodes of $\tilde \error$ that are at distance $\leq 2$ from a node of $\syndrome_i$ are in $\cE_i$, that is in $\error_i$. Moreover, $\sigma(\tilde \error_i) \subseteq \cE_i$. 
As a result, we have $\sigma(\tilde \error_i) = \sigma_i$.
This proves that $\tilde \error_i$ is a valid correction in $\cE_i$.
The converse implication can be proven with the same argument.
\end{proof}

\medskip 
Algorithm~\ref{algo:uf_decoder} relies on two subroutines:
\begin{itemize}
\item 
{\em The component validity subroutine} takes as an input a connected component of $\cE$ and the syndrome and it returns {\bf True} if the connected component is valid and {\bf False} otherwise.
\item 
{\em The component correction subroutine} takes as an input a valid connected component of $\cE$ and its syndrome and it returns a valid correction for the connected component.
\end{itemize}
For a general CSS code, these two subroutines can be reduced to solving a $\Z_2$-linear system of equations. 
This can be done in cubic complexity by Gaussian elimination, although in the case of the sparse matrices which arise for LDPC codes, a better complexity may be achievable \cite{mullen2013}. 
For surface codes, algorithms with almost-linear worst case complexity exist \cite{delfosse2020peeling}. 
In this work, we do not consider the optimization of these two subroutines for LDPC codes. Instead we focus on the error correction performance of the Union-Find decoder.

\medskip
The growth procedure of Algorithm~\ref{algo:uf_decoder} differs from the original Union-Find decoder because in the present work all connected components including the valid ones are grown simultaneously at each round. The proof of Proposition~\ref{prop:UF_success} relies on this modification of the growth procedure.

\medskip
The Union-Find decoder for CSS codes proposed in this article can be immediately generalized to arbitrary stabilizer codes \cite{gottesman1997stabilizer}. 
Each check node of the Tanner graph of a general stabilizer code corresponds to a stabilizer generator. 
The definition of valid connected components and valid corrections must be adapted by replacing the $X$ error $\error$ by a general Pauli error.

\section{Correction of errors with small covering radius} \label{sec:uf_success}

The Union-Find decoder introduced in Sec.~\ref{sec:decoder} can be applied to any CSS code. However, it is unlikely to succeed in correcting errors for a general code defined by large weight checks because the Union-Find clusters grow too fast and they will cover the whole Tanner graph after only a few growth rounds. One may expect to be confronted with the same issue for some LDPC codes, for example those with Tanner graphs which are expander graphs. However we will see that broad classes of errors can be corrected even for some families of LDPC codes with Tanner graphs exhibiting expansion. 

\medskip
In this section, we introduce the notion of a covering radius of an error and we prove that the Union-Find decoder successfully corrects all low-weight errors if their covering radius is small enough. 

\begin{defi}
The {\em covering radius of an error} $\error \subseteq \vq$ with syndrome $\syndrome$ is defined to be the minimum integer $r$ such that the $r$-neighborhood of $\syndrome$ in the Tanner graph $T$ covers $x$, that is
$$
\rcov(x) = \min \{ r \in \N \ | \ x \subseteq B_T(\syndrome, r)  \} \cdot
$$
If the syndrome of $x$ is trivial, we define $\rcov(x) = 0$.
\end{defi}

Fig.~\ref{fig:rcov_def} provides a visualization of the covering radius.
The {covering radius of a syndrome} $\syndrome \subset \vc$, denoted $\rcovbar(\syndrome)$, is defined to be the minimum covering radius of an error with syndrome $\syndrome$.

\begin{figure}
\centering
\includegraphics[scale=.5]{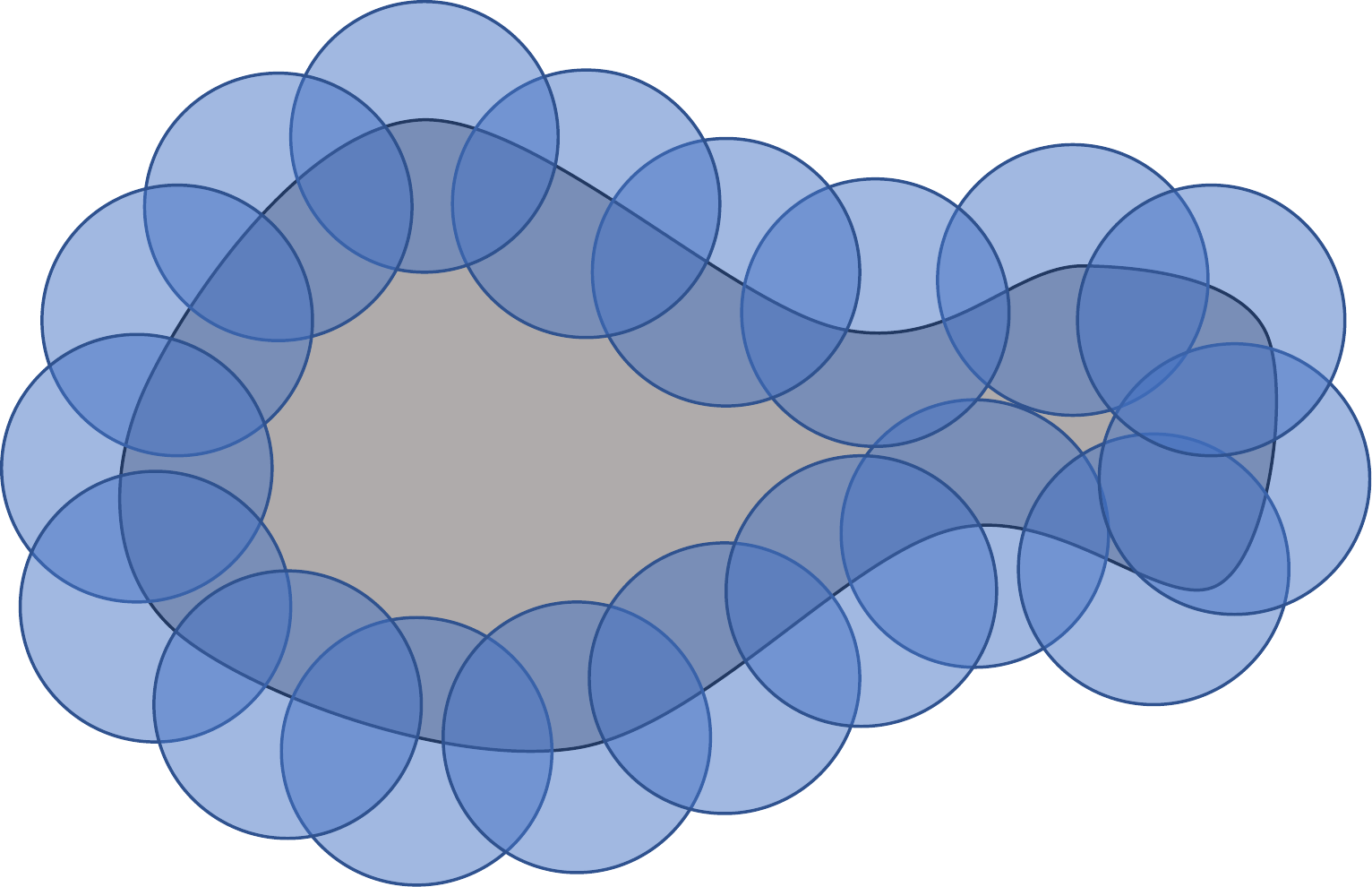}
\caption{Pictorial representation of an error $\error$ (grey) and its syndrome (the boundary of the grey region). To define the covering radius of this error, we consider the set of balls of radius $r$ centered at the syndrome nodes (blue region). The covering radius of $\error$ is the minimum radius $r$ such that the set of balls of radius $r$ covers the whole error $\error$.
}
\label{fig:rcov_def}
\end{figure}

\medskip 
The following proposition proves that the Union-Find decoder successfully corrects all errors with weight less than a polynomial in $d$ if their covering radii are small enough.
We will provide examples of standard families of quantum LDPC codes satisfying the covering radius condition for low-weight errors later.

\begin{prop} \label{prop:UF_success}
Consider a CSS code with minimum distance $d$ whose Tanner graph has bounded degree $\delta$. 
Assume that there exist constants $w$ and $C$ such that
$\rcovbar(\syndrome) \leq C \log(|\error|)$
for all errors $\error$ with syndrome $\syndrome$ and with weight $|x| < w$.
Then, the Union-Find decoder (Algorithm~\ref{algo:uf_decoder}) successfully corrects all errors with weight
$|\error| < \min(w, A d^\alpha)$
where 
$\alpha = \frac{1}{1+C\log(\delta)}$
and
$A = \left( \frac{1}{2\delta^2} \right)^{\alpha}$.
\end{prop}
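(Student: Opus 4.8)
The plan is to analyze the cluster-growth phase of Algorithm~\ref{algo:uf_decoder} and argue that, under the covering-radius hypothesis, the clusters stop growing while they are still small enough that a valid correction inside each connected component is forced to be equivalent to the true error. Fix an error $\error$ with $|\error| < \min(w, A d^\alpha)$ and syndrome $\syndrome$. First I would observe that, since all components are grown simultaneously, after $t$ rounds the grown set satisfies $\cE^{(t)} = B_T(\syndrome, t)$. By the hypothesis, $\rcovbar(\syndrome) \le C\log(|\error|)$, so there is \emph{some} error $\error'$ with syndrome $\syndrome$ and $\error' \subseteq B_T(\syndrome, C\log|\error|)$. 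I want to conclude that once $t \ge C\log|\error|$ the set $\cE^{(t)}$ is valid and the loop terminates: indeed $\error' \subseteq B_T(\syndrome,t)$, and with one extra growth round the check nodes of $\syndrome(\error')=\syndrome$ sit in the interior, so $\error'$ witnesses validity of $\cE^{(t+1)}$ — hence the loop exits after at most roughly $C\log|\error| + O(1)$ rounds. (The precise bookkeeping about interiors versus the set $\cE$ is the kind of routine step I would not grind through here, but it is where one must be careful about the factor $2$ and the additive constants that end up inside $A$.)

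Next I would bound the size of the terminal cluster set. Each growth round adds at most a factor $\delta$ per node along the frontier, so starting from $|\syndrome| \le \delta|\error|$ syndrome nodes and growing for $T \le C\log|\error| + O(1)$ rounds gives $|\cE| \le |\syndrome|\,\delta^{T} \le \delta|\error|\cdot \delta^{C\log|\error|+O(1)} = O(\delta^2)\,|\error|^{1+C\log\delta}$. This is the step that produces the exponents: writing $|\error| < A d^\alpha$ with $\alpha = 1/(1+C\log\delta)$ makes $|\error|^{1+C\log\delta} < A^{1/\alpha} d = \frac{1}{2\delta^2} d$, so $|\cE| < \tfrac12 d$ (after absorbing the $O(\delta^2)$ constant into the choice $A = (2\delta^2)^{-\alpha}$). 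Thus the total support of the grown clusters — and in particular the support of any candidate correction found in line~5 — has weight less than $d/2$.

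Finally I would close the argument using the minimum distance. For each connected component $\cE_i$, line~5 returns $\tilde\error_i \subseteq \Int(\cE_i)$ with $\syndrome(\tilde\error_i) = \syndrome \cap \cE_i$; by Lemma~\ref{lemma:validity_components}, $\tilde\error = \cup_i \tilde\error_i$ satisfies $\syndrome(\tilde\error) = \syndrome = \syndrome(\error)$, so $\error + \tilde\error$ has trivial syndrome. Its weight is at most $|\error| + |\tilde\error| < d/2 + d/2 = d$ (using $|\error| < d/2$, which follows from $|\error| < A d^\alpha$ for $d$ large, and $|\tilde\error| \le |\cE| < d/2$). An error of weight strictly less than $d$ with trivial syndrome must be a stabilizer by definition of $d = \min(d_X,d_Z) \le d_Z$, hence $\tilde\error$ is equivalent to $\error$ and the decoder succeeds.

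The main obstacle I expect is not any single deep step but the careful accounting in the first two paragraphs: one must track the distinction between the grown vertex set $\cE^{(t)}$ and its interior $\Int(\cE^{(t)})$ (the validity certificate lives in the interior, costing an extra growth round), handle the bipartite structure of the Tanner graph correctly so that the "$\delta$ per round" bound is honest, and verify that the witness error $\error'$ of minimal covering radius — which need not be $\error$ itself — genuinely certifies validity at the right round. Getting these constants to line up is exactly what pins down the stated $A$ and $\alpha$; none of it is conceptually hard, but it is where the statement's precise form is earned.
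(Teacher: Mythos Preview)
Your proposal is correct and follows essentially the same route as the paper: bound the number of growth rounds by $\rcovbar(\syndrome)\le C\log|\error|$, bound $|\cE|$ by $|\syndrome|\cdot\delta^{1+\rcovbar(\syndrome)}\le \delta^2|\error|^{1+C\log\delta}$, and conclude that the residual error $\error+\tilde\error$ has weight below $d$. The only cosmetic difference is that the paper combines $|\error|+|\cE|\le 2\delta^2|\error|^{1+C\log\delta}<d$ in one stroke, whereas you split it into $|\error|<d/2$ and $|\cE|<d/2$; your ``for $d$ large'' caveat is in fact unnecessary, since $|\error|\ge 1$ together with $|\error|<Ad^\alpha$ already forces $d>2\delta^2$ and hence $Ad^\alpha=(d/2\delta^2)^\alpha\le d/(2\delta^2)<d/2$.
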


\begin{proof}
Consider an error $\error$ such that $|\error| < \min(w, A d^\alpha)$ and let $\syndrome = \syndrome(\error)$.
By definition of the covering radius, we are guaranteed to obtain a set of valid connected components in Algorithm~\ref{algo:uf_decoder} after at most $\rcovbar(\syndrome)$ growth steps.
This proves that the correction is contained within
$
\cE \subseteq \bigcup_{s \in \syndrome} B_T(s, \rcovbar(\syndrome))
$, where the decomposition is over each vertex in the syndrome $\syndrome$ of $\error$.
Taking the cardinality on both sides, we obtain
$
|\cE| \leq |\syndrome| \cdot \delta^{1 + \rcovbar(\syndrome)}
$
where $\delta$ is the degree of the Tanner graph. 
Moreover, the weight of the syndrome is at most $\delta |\error|$. This proves that 
$$
|\cE| 
\leq \delta^2 |\error| \delta^{\rcovbar(\syndrome)} 
\leq \delta^2 |\error|^{1 + C \log(\delta)}.
$$
The Union-Find decoder returns a correction $\tilde \error$ included in $\cE$. After correction, the residual error $\error + \tilde\error$ (which has trivial syndrome) satisfies 
\begin{align*}
|\error + \tilde\error| 
\leq |\error| + |\cE| 
\leq 2 \delta^2 |\error|^{1 + C \log(\delta)}
< 2 \delta^2 \left( A d^\alpha \right)^{1 + C \log(\delta)}
= d.
\end{align*}\
This proves that the decoder succeeds since the residual error has weight $< d$, and therefore must be a stabilizer.
\end{proof}

\section{Covering radius of hyperbolic codes} \label{sec:hyp_codes}

Topological codes are among the most well-known quantum LDPC codes, and are defined in terms of qubits embedded in smooth manifolds, which are typically either hyperbolic or Euclidean; see Sec.~\ref{sec:manifolds} and Sec.~\ref{sec:LDPC-quantum-codes}.
The standard Union-Find decoder corrects any error with weight up to $(d-1)/2$ for two-dimensional hyperbolic and toric codes but it was not investigated in higher dimensions. 
In this section, we derive a bound on the covering radius of hyperbolic codes in dimension $D \geq 3$. Combined with Proposition~\ref{prop:UF_success}, this shows that a Union-Find decoder can successfully correct low-weight errors for hyperbolic codes in any dimension.

\medskip
Here, we consider a topological code defined on a cellulation of a $D$-dimensional hyperbolic manifold $\cM$ \cite{freedman2002z2, guth20144D_codes, londe2017golden_codes}, with the qubit set $Q = \cMi$ and the $Z$-check set $C = \cM_{i-1}$ following the notation in Sec.~\ref{sec:LDPC-quantum-codes}.
The syndrome of an $X$ error $\error \subseteq \cMi$ is $\partial(\error)$, the $\Z_2$-boundary of $\error$.
We focus here on the correction of $X$ errors. The same procedure corrects $Z$ errors in the dual cellulation. Indeed, the dual of the cellulation $\cM_0, \dots, \cM_D$ is the cellulation of $\cM$ obtained by replacing each $i$-cell by a $(D-i)$-cell.
Therefore, in the dual cellulation, qubits are placed on $(D-i)$-cells, $X$-checks correspond to $(D-i-1)$-cells and the syndrome of a $Z$ error is given by its $\Z_2$-boundary.

\medskip
Our upper bound on the covering radius exploits Anderson's theorem which provides a lower bound on the volume of a minimum $i$-chain in hyperbolic geometry. 
This result was also crucial to obtain a lower bound on the minimum distance of four-dimensional hyperbolic codes in the work of Guth and Lubotsky \cite{guth20144D_codes}.

\begin{theo} [Application of \cite{anderson1982min_varieties}, see \cite{guth20144D_codes} : Theorem 22 ] \label{theo:Anderson} 
Let $1 \leq i \leq D-1$ and let $X$ be an $i$-chain with boundary $\partial(X)$ in the hyperbolic space $\HypD$. 
Suppose that $X$ is a minimum $i$-chain with boundary $\partial(X)$.
Assume that $X$ contains the point $p \in \HypD$ and that its boundary $\partial(X)$
is included in the boundary of the ball $B_{\HypD}(p, R)$. Then
\begin{equation}
\label{eq:Anderson}
\Vol_i(X \cap B_{\HypD}(p, R)) \geq c_i e^{(i-1)R},
\end{equation}
for some constant $c_i$.
\end{theo}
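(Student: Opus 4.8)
The plan is to prove the volume lower bound by a \emph{monotonicity formula} for area‑minimizing chains in $\HypD$, exactly along the lines of Anderson's classical argument. Write $\rho(x)=d_{\HypD}(p,x)$ for the distance to $p$ and, for $0<r\le R$, set $f(r)=\Vol_i\!\bigl(X\cap B_{\HypD}(p,r)\bigr)$ and let $g(r)=\Vol_{i-1}\!\bigl(X\cap\partial B_{\HypD}(p,r)\bigr)$ be the $(i-1)$-volume of the slice of $X$ by the geodesic sphere of radius $r$. The first ingredient is the coarea formula applied to $\rho$ restricted to $X$: since $|\nabla^{X}\rho|\le 1$, it yields $f'(r)\ge g(r)$ for almost every $r\in(0,R)$.

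The second ingredient is a comparison estimate coming from minimality. Because $\partial(X)\subseteq\partial B_{\HypD}(p,R)$, for every $r<R$ the only boundary carried by the restriction $X\cap B_{\HypD}(p,r)$ is the slice $X\cap\partial B_{\HypD}(p,r)$, which is therefore a cycle. Hence the chain obtained by removing $X$ from inside $B_{\HypD}(p,r)$ and gluing in the geodesic cone $C_r$ from $p$ over that slice has the same boundary $\partial(X)$ as $X$. Minimality of $X$ then gives $f(r)\le\Vol_i(C_r)$. Using geodesic polar coordinates, in which the hyperbolic metric reads $dr^{2}+\sinh^{2}(r)\,d\Omega^{2}$, the cone over an $(i-1)$-set of slice-volume $g(r)$ has volume $\Vol_i(C_r)=\dfrac{g(r)}{\sinh^{i-1}(r)}\int_0^{r}\sinh^{i-1}(t)\,dt$. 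Combining this with $f'(r)\ge g(r)$ produces the differential inequality
\begin{equation*}
\frac{f'(r)}{f(r)}\;\ge\;\frac{\sinh^{i-1}(r)}{\int_0^{r}\sinh^{i-1}(t)\,dt}\;=\;\frac{d}{dr}\,\log\!\left(\int_0^{r}\sinh^{i-1}(t)\,dt\right).
\end{equation*}

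Integrating this inequality from a small $\varepsilon$ up to $R$ shows that $f(r)\big/\!\int_0^{r}\sinh^{i-1}(t)\,dt$ is nondecreasing in $r$; letting $\varepsilon\to 0$ and using that a nonzero minimizing chain passing through $p$ has density at least $1$ at $p$ (so $f(\varepsilon)$ is comparable to $\varepsilon^{i}$, as is $\int_0^{\varepsilon}\sinh^{i-1}$), one obtains $f(R)\ge c_i'\int_0^{R}\sinh^{i-1}(t)\,dt$ for a constant $c_i'>0$ depending only on $i$. A direct estimate of the remaining integral, using $\sinh t\ge\tfrac14 e^{t}$ for $t$ bounded below, gives $\int_0^{R}\sinh^{i-1}(t)\,dt\ge c_i\,e^{(i-1)R}$ (for $R$ above a fixed constant, which is the only regime relevant to the application to covering radii), which is the claimed inequality~\eqref{eq:Anderson}.

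The main obstacle is not the differential inequality but making the two geometric-measure-theoretic inputs legitimate in the setting we actually use, where $X$ is a $\Z_2$-chain of a fixed cellulation rather than a smooth area-minimizing current: one must either replace $X$ by a genuine area-minimizing current with the same boundary while controlling the volume lost in the passage (here the bounded cell volumes of Eq.~\eqref{eq:cells_vol} are what keep this loss under control), or re-derive the coarea/slicing identities and the cone comparison directly for cellular chains. This reduction is precisely what is carried out in \cite{anderson1982min_varieties} and \cite{guth20144D_codes}, and we rely on it.
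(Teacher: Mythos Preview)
The paper does not give a proof of this statement: it is quoted as a black box from Anderson's work via Guth--Lubotzky (Theorem~22 there) and then applied in Proposition~\ref{prop:cov_bound_hyperbolic}. So there is no ``paper's own proof'' to compare against.

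That said, your sketch is the standard monotonicity argument that underlies Anderson's theorem, and the outline is correct: the coarea inequality $f'(r)\ge g(r)$, the cone competitor bound $f(r)\le g(r)\,\sinh^{-(i-1)}(r)\int_0^r\sinh^{i-1}(t)\,dt$ from minimality, and the resulting monotonicity of $f(r)\big/\int_0^r\sinh^{i-1}(t)\,dt$ combined with unit density at $p$ is precisely how the result is obtained in the smooth/GMT setting. The asymptotics $\int_0^R\sinh^{i-1}(t)\,dt\asymp e^{(i-1)R}$ then give the stated bound. The caveat you raise at the end---that in this paper $X$ is a cellular $\Z_2$-chain rather than an area-minimizing current, so one must either pass to a genuine minimizer or redo slicing and cone comparison cellularly---is exactly the technical point that the paper sidesteps by citing \cite{anderson1982min_varieties} and \cite{guth20144D_codes} rather than proving the theorem itself.
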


The right hand side of Eq.~\ref{eq:Anderson} is proportional to the volume of the $i$-dimensional hyperbolic ball.

\begin{prop} \label{prop:cov_bound_hyperbolic}
Consider a hyperbolic topological code defined on a cellulation of a manifold $\cM$ with qubits on $i$-cells with $i \in \{2, \dots, D-1\}$ and $D \geq 3$. 
Let $w = \frac{c_i}{v_i'} e^{(i-1)(\Rinj - 1)}$ where $\Rinj$ is the injectivity radius of $\cM$, $c_i$ is the constant from Theorem~\ref{theo:Anderson} and $v_i'$ is the maximum volume of any $i$-cell in the cellulation.
For any error $\error$ with weight $|\error| < w$, the covering radius of the syndrome $\syndrome$ of $\error$ satisfies
$$
\rcovbar(\syndrome) \leq C \log(|\error|),
$$
for some constant $C$ that depends only on $D$, $i$, and $v_i'$. 
\end{prop}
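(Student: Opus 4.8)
\emph{Proof plan.} Since $\rcovbar(\syndrome)$ is a minimum over all errors with syndrome $\syndrome$, it suffices to produce one convenient error and bound its covering radius. The plan is to take $x^*$ to be a \emph{minimum-volume} $i$-chain with $\partial(x^*)=\syndrome$ (so in particular $\Vol_i(x^*)\le \Vol_i(\error)\le v_i'|\error|$), and to show $x^*\subseteq B_T(\syndrome,r)$ with $r\le C\log(|\error|)$; then $\rcovbar(\syndrome)\le\rcov(x^*)\le r$. Before doing any geometry I would set up a dictionary between the Tanner-graph metric and the metric of $\cM$: the bounded degree of the Tanner graph together with the uniform volume bounds~\eqref{eq:cells_vol} force the $i$- and $(i-1)$-cells to have diameters and inradii bounded above and below by constants depending only on $D$, $i$, $v_i$, $v_i'$, and this yields a constant $\lambda_1>0$ with $\lambda_1\,d_T(c,\syndrome)\le d_\cM(p_c,|\syndrome|)+1$ for every $i$-cell $c$, where $p_c$ is the barycentre of $c$ and $|\syndrome|\subseteq\cM$ is the closed support of the syndrome. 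Thus it is enough to prove $d_\cM(p_c,|\syndrome|)\le O(\log|\error|)$ for every cell $c$ of $x^*$.

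Fix $c\in x^*$ and set $R_c=d_\cM(p_c,|\syndrome|)$; then $B_\cM(p_c,R_c)$ contains no cell of $\syndrome=\partial(x^*)$. I would first rule out $R_c\ge\Rinj-1$. In that case $B_\cM(p_c,\Rinj-1)$ lifts isometrically, through a branch of $\pi^{-1}$, to a genuine hyperbolic ball $\tilde B=B_{\HypD}(\tilde p,\Rinj-1)$, and the lift $\tilde x^*\cap\tilde B$ of the part of $x^*$ inside that ball is an $i$-chain whose boundary sits within an $O(1)$ buffer of $\partial\tilde B$ (absorbed by the unit slack in the definition of $w$). The key point is that $\tilde x^*\cap\tilde B$ must itself be a minimum $i$-chain with its own boundary: if a strictly smaller $X'$ existed, one may assume $X'\subseteq\overline{\tilde B}$ (nearest-point projection onto the convex hyperbolic ball is $1$-Lipschitz, hence volume-non-increasing, and fixes $\partial\tilde B$), push $X'$ down through $\pi$, and splice it into $x^*$ in place of $x^*\cap B_\cM(p_c,\Rinj-1)$; the boundaries on $\partial B$ cancel, so the result is an error with syndrome $\syndrome$ of strictly smaller volume — contradicting minimality of $x^*$. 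Theorem~\ref{theo:Anderson} then gives $\Vol_i(\tilde x^*\cap\tilde B)\ge c_i e^{(i-1)(\Rinj-1)}$, while $\Vol_i(\tilde x^*\cap\tilde B)\le\Vol_i(x^*)\le v_i'|\error|<v_i'w=c_i e^{(i-1)(\Rinj-1)}$, a contradiction. Hence $R_c<\Rinj-1$ for every $c\in x^*$.

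Since $R_c<\Rinj$, the ball $B_\cM(p_c,R_c)$ lifts isometrically and the very same argument applied with radius $R_c$ (up to the $O(1)$ buffer) gives $c_i e^{(i-1)R_c}\le\Vol_i\big(x^*\cap B_\cM(p_c,R_c)\big)\le\Vol_i(x^*)\le v_i'|\error|$, whence $R_c\le\frac{1}{i-1}\log\!\big(v_i'|\error|/c_i\big)+O(1)$. Feeding this back through the metric dictionary, $d_T(c,\syndrome)\le\frac{1}{\lambda_1}\big(R_c+1\big)\le C\log|\error|$ for every $c\in x^*$ with $C$ depending only on $D$, $i$, $v_i'$ (through $\lambda_1$ and $c_i$), at least once $|\error|\ge 2$; errors of weight $\le 1$ have covering radius $\le 1$ and are handled separately. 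This gives $\rcovbar(\syndrome)\le\rcov(x^*)\le C\log|\error|$.

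The step I expect to be the real work is making the ``minimality transfer'' and Anderson's (continuous) volume bound rigorous for \emph{cellular} chains in the quotient manifold $\cM$: one must move between cellular and geometric $i$-chains while keeping volumes comparable, control the $O(1)$ discrepancy between where cells are cut and the metric sphere $\partial\tilde B$, and check that radial projection plus splicing keeps boundaries matched — all following the pattern of Guth--Lubotsky's minimum-distance argument~\cite{guth20144D_codes}. The metric comparison between the Tanner graph and $\cM$, though routine, likewise relies on the bounded-geometry consequences of~\eqref{eq:cells_vol}, and is where the dependence of $C$ on $v_i'$ (and $v_i$, $\delta$) enters.
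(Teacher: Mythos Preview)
Your approach is essentially the paper's: reduce to a minimum representative with syndrome $\syndrome$, locate a cell far from $\syndrome$, lift the syndrome-free ball around it to $\HypD$, and apply Anderson's theorem to obtain an exponential volume lower bound that inverts to $\rcovbar(\syndrome)\le C\log|\error|$. The paper is terser---it silently identifies Tanner-graph balls with manifold balls of the same nominal radius and invokes Theorem~\ref{theo:Anderson} without justifying why the restricted lift remains a minimum chain---so your metric dictionary and projection--splicing argument make explicit precisely the steps the paper glosses over and correctly flags as the ``real work.''
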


\begin{proof}
To prove the proposition, we will derive a lower bound on $|\error|$ which is exponential in $\rcovbar(\syndrome)$ where $\syndrome$ is the syndrome of $\error$.
Given that our goal is to establish a lower bound on the weight of $\error$, we can assume that $\error$ is a minimum weight error with syndrome $\syndrome$. This minimality assumption will be necessary to apply Theorem~\ref{theo:Anderson}.

By definition of the covering radius, $B_T(\syndrome, \rcovbar(\syndrome)-1)$ does not cover the whole error $\error$ as one can see on Fig.~\ref{fig:proof}.
As a result, there exists some $q$ in $\error$ such that $q$ does not belong to $B_T(\syndrome, \rcov(\error)-1)$.
Then, we have 
$
B_T(q, \rcovbar(\syndrome)-1) \cap \syndrome = \emptyset.
$
We will obtain a lower bound on the weight of $\error$ by counting the qubits of $\error$ in this ball.
\begin{align*}
|\error| 
& \geq |\error \cap B_T(q, \rcovbar(\syndrome)-1)|, \\
& \geq \frac{1}{v_i'} \Vol_i(\error \cap B_\cM(q', \rcovbar(\syndrome)-1)).
\end{align*}
Therein, we used Eq~\eqref{eq:cells_vol} which provides an upper bound on the volume of cells and $q'$ is a point of $\cM$ included in the cell $q$.
Therein, the term $\error \cap B_\cM(q', \rcovbar(\syndrome)-1)$ refers to restriction of $\error$, considered as an $i$-chain, to the ball with center $q'$ and radius $\rcovbar(\syndrome)-1)$ in $\cM$.

Assume first that $\rcovbar(\syndrome) < \Rinj$.
Then, we can map $\error \cap B_\cM(q', \rcovbar(\syndrome)-1)$ onto an $i$-chain $y$ of $\HypD$ through the local homeomorphism $\pi$. 
By definition, $y$ is an $i$-chain included in a ball with radius $\rcov(\error)-1$ in $\HypD$ and its boundary $\partial(y)$ is included in the boundary of this ball.
Applying Theorem~\ref{theo:Anderson} we obtain
$
\Vol_i(y) \geq c_i e^{(i-1)(\rcovbar(\syndrome)-1)}.
$
Putting everything together this leads to 
\begin{align*}
|\error| \geq \frac{c_i}{v_i'} e^{(i-1)(\rcovbar(\syndrome)-1)}.
\end{align*}
Taking the log, this proves the proposition in the case
$\rcovbar(\syndrome) < \Rinj$.

If $\rcovbar(\syndrome) \geq \Rinj$, we can only use the local homeomorphism for radius $< \Rinj$.
This yields
\begin{align*}
|\error| 
\geq \frac{1}{v_i'} \Vol_i(\error \cap B_{\HypD}(q, \Rinj-1))
\geq \frac{c_i}{v_i'} e^{(i-1)(\Rinj-1)} = w,
\end{align*}
and there is nothing to prove since $|\error| \geq w$.
\end{proof}

\begin{figure}
\centering

\includegraphics[scale=.45]{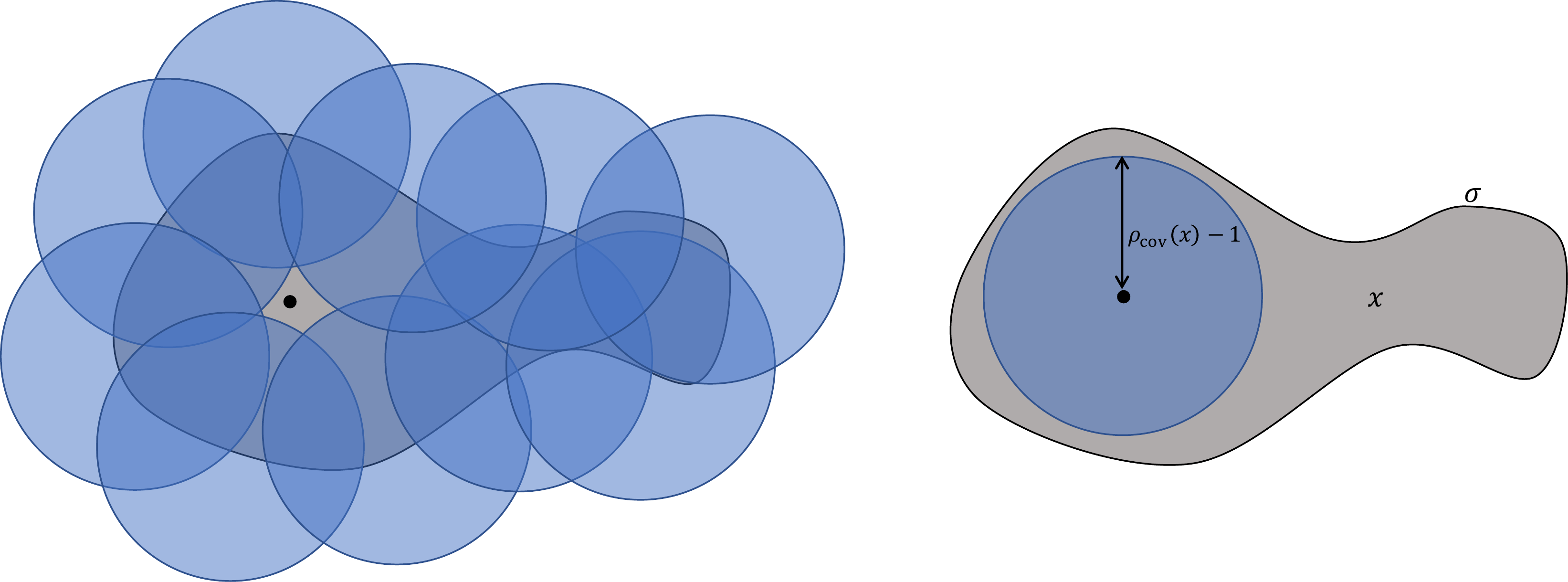}

(a) \hspace{6cm} (b)

\caption{
Illustration of the proof of Proposition~\ref{prop:cov_bound_hyperbolic} which uses the existence of a qubit $q$ in the support of the $\error$ with syndrome $\syndrome$ such that $B_T(q, \rcovbar(\syndrome))$ does not contain any node of $\syndrome$.
(a) By definition of the covering radius, the balls $B_T(s, \rcovbar(\syndrome)-1)$ (in blue) do not cover the error. Select $q \in \error$ in the non-cover region (black node).
(b) As a result the ball centered at $q$ with radius $\rcovbar(\syndrome)-1$ does not contain any syndrome node.
}
\label{fig:proof}
\end{figure}

\medskip
The following corollary, which is a straightforward application of Proposition~\ref{prop:UF_success} and Proposition~\ref{prop:cov_bound_hyperbolic}, is stated without a proof.

\begin{coro} \label{cor:hyperbolic_codes}
Consider a family of $D$-dimensional hyperbolic codes with qubits on $i$-cells with $i \in \{2, \dots, D-1\}$ and $D \geq 3$.
If the injectivity radius of the manifold grows at least logarithmically with the volume of the manifold, then the Union-Find decoder (Algorithm~\ref{algo:uf_decoder}) successfully corrects all errors with weight up to 
$Bn^\beta$ for some constants $B, \beta > 0$
\end{coro}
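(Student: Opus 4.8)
The plan is to chain Proposition~\ref{prop:cov_bound_hyperbolic} into Proposition~\ref{prop:UF_success} and then argue that the resulting correctable weight $\min(w, A d^\alpha)$ is bounded below by a positive power of $n$. Two dictionary entries between geometry and combinatorics are needed: the relation between $n$ and the volume $\Vol(\cM)$, and the relation between $\Rinj$ and $n$.

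First I would record that $n = |\cMi| = \Theta(\Vol(\cM))$: by Eq.~\eqref{eq:cells_vol} every $i$-cell has volume in $[v_i, v_i']$, and since the cellulations under consideration have bounded local combinatorial complexity, the cell counts $|\cM_0|, \dots, |\cM_D|$ are mutually proportional and proportional to $\Vol(\cM)$. Combined with the hypothesis that $\Rinj$ grows at least logarithmically in $\Vol(\cM)$, this gives $\Rinj \geq c \log n$ for some constant $c > 0$ and all large $n$. Substituting into $w = \frac{c_i}{v_i'} e^{(i-1)(\Rinj - 1)}$ from Proposition~\ref{prop:cov_bound_hyperbolic} then yields $w \geq c' n^{(i-1)c}$ for some constant $c'>0$, which is a positive power of $n$ since $i \geq 2$.

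Next I would check that the minimum distance $d = \min(d_X, d_Z)$ is itself at least polynomial in $n$, so that $A d^\alpha$ is not the bottleneck. A minimum-weight representative $X$ of a nontrivial $X$-logical class is a cycle, and hence cannot fit inside any ball of radius $< \Rinj$, since such a ball lifts homeomorphically to a contractible ball of $\HypD$. For $R < \Rinj$ the chain $X \cap B_\cM(p,R)$ has boundary on $\partial B_\cM(p,R)$ and inherits minimality from $X$, so Theorem~\ref{theo:Anderson}, transported through the local homeomorphism $\pi$, gives $\Vol_i(X \cap B_\cM(p,R)) \geq c_i e^{(i-1)R}$; taking $R = \Rinj - 1$ and dividing by $v_i'$ shows $d_X \geq \tfrac{c_i}{v_i'} e^{(i-1)(\Rinj-1)} = w$. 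The same argument in the dual cellulation bounds $d_Z$ below by a positive power of $n$ as well; alternatively one may simply invoke the known polynomial distance of the hyperbolic code families of \cite{guth20144D_codes, londe2017golden_codes}. Since the codes are LDPC, the Tanner-graph degree $\delta$ is a constant, and the constant $C$ of Proposition~\ref{prop:cov_bound_hyperbolic} depends only on $D$, $i$, $v_i'$, which are fixed along the family; hence $\alpha = (1 + C\log\delta)^{-1}$ and $A = (2\delta^2)^{-\alpha}$ are positive constants, so $A d^\alpha$ is a positive power of $n$.

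Putting these together gives $\min(w, A d^\alpha) \geq B n^\beta$ for suitable constants $B, \beta > 0$, and Proposition~\ref{prop:UF_success} then guarantees that the Union-Find decoder corrects every error of weight below $B n^\beta$. I expect the only delicate point to be the lower bound on $d$ -- particularly the $Z$-distance (cosystolic) direction -- where one must verify that the restriction of a minimal cycle to a metric ball is again minimal for its induced boundary and that $\pi$ faithfully transports the configuration into $\HypD$; everything else is routine accounting. If one prefers to avoid reproving the distance bound, citing the established parameters of the targeted hyperbolic families suffices, and the corollary then follows immediately from Propositions~\ref{prop:UF_success} and~\ref{prop:cov_bound_hyperbolic}.
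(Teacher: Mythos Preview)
Your proposal is correct and matches the paper's approach: the paper states the corollary without proof, noting only that it is ``a straightforward application of Proposition~\ref{prop:UF_success} and Proposition~\ref{prop:cov_bound_hyperbolic}.'' You have simply filled in the details the paper leaves implicit---the translation $\Rinj \gtrsim \log \Vol(\cM) \Rightarrow w \gtrsim n^{(i-1)c}$ and the polynomial lower bound on $d$ (which the paper tacitly imports from \cite{guth20144D_codes, londe2017golden_codes})---so there is nothing to correct.
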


This corollary applies for instance to the 4D hyperbolic codes of Guth and Lubotsky~\cite{guth20144D_codes} based on the lower bound on the injectivity radius obtained in Corollary~11 in~\cite{guth20144D_codes}.
It also applies to the 4D hyperbolic codes studied by Londe and Breuckmann \cite{londe2017golden_codes}.

\section{Covering radius of locally testable codes} \label{sec:loc_testable_codes}

In this section, we derive a logarithmic upper bound on the covering radius from the soundness property of locally testable codes.
Our result applies to important classes of quantum LDPC codes such as quantum expander codes \cite{leverrier2015quantum_expander_codes}, and relies on the following proposition.

\begin{prop} \label{prop:cov_bound_soundness}
Let $\error$ be an error for a CSS code with a bounded degree Tanner graph and with soundness $(\alpha, m)$.
For any error $\error$ with weight $|\error| < m$, the covering radius of the syndrome $\syndrome$ of $\error$ satisfies
$$
\rcovbar(\syndrome) \leq C \log(|\error|),
$$
for some constant $C$ that depends only on $\alpha$ and the degree of the Tanner graph.
\end{prop}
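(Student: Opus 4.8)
The plan is to mirror the structure of the proof of Proposition~\ref{prop:cov_bound_hyperbolic}, replacing the geometric input (Anderson's theorem) with the soundness inequality. As in that proof, I may assume $\error$ is a minimum weight error with syndrome $\syndrome$, so that $\error$ is a reduced error and (since $|\error| < m$) the soundness bound $|\syndrome(\error)| \leq \alpha |\error|$ applies. Write $\delta$ for the degree of the Tanner graph and $r = \rcovbar(\syndrome) = \rcov(\error)$. By definition of the covering radius, $B_T(\syndrome, r-1)$ does not cover $\error$, so there is a qubit node $q \in \error$ with $B_T(q, r-1) \cap \syndrome = \emptyset$. The goal is to show that this forces $|\error|$ to be exponentially large in $r$.

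First I would set up a local reduced subproblem. Let $\error' = \error \cap B_T(q, r-1)$ be the restriction of the error to that ball. The key observation is that $\error'$ is itself a reduced (minimum weight) error for its own syndrome: if some $\error''$ with $\syndrome(\error'') = \syndrome(\error')$ had strictly smaller weight, then $\error + \error' + \error''$ would have the same syndrome as $\error$ but strictly smaller weight, contradicting minimality of $\error$ — here one uses that $\error''$ can be taken supported near $B_T(q,r-1)$ so the syndrome bookkeeping is local. Hence soundness applies to $\error'$: $|\syndrome(\error')| \leq \alpha |\error'|$. Now I want a lower bound on $|\syndrome(\error')|$ in terms of $r$: since $q \in \error'$ is far (distance $\geq r$ in the Tanner graph, hence the boundary cannot "reach back" through $\syndrome$) from any node of $\syndrome = \syndrome(\error)$, the syndrome of the truncated error $\error'$ must contain nodes that propagate outward from $q$. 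Concretely, $\syndrome(\error') \subseteq \syndrome \cup (\text{cut checks created by truncation})$, and since $B_T(q,r-1)\cap\syndrome=\emptyset$, within $B_T(q, r-1)$ the sets $\syndrome(\error')$ and $\syndrome(\error)$ agree; because $\error$ has no syndrome near $q$ but $q$ itself carries an error, the truncated error must create syndrome on the sphere of radius $\sim r-1$ around $q$. The plan is to argue, by a peeling/connectivity argument along shells $S_T(q, \rho)$ for $\rho = 1, \dots, r-1$, that at each shell the truncated error must have a nonempty "frontier" of syndrome, or else the error restricted inside that shell would already be a lower-weight correction contradicting minimality of $\error$. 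This yields $|\syndrome(\error')| \geq r - 1$ (up to constants).

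Combining the two bounds gives $r - 1 \leq |\syndrome(\error')| \leq \alpha|\error'| \leq \alpha |\error|$, which is only linear in $r$, not good enough. So the real plan is to get an \emph{exponential} lower bound by iterating the soundness bound across scales rather than using it once. I would partition $B_T(q, r-1)$ into dyadic annuli $A_j = B_T(q, 2^{j+1}) \setminus B_T(q, 2^j)$ for $j = 0, 1, \dots, \log_2(r-1)$, and apply the soundness argument to the truncation of $\error$ to each $B_T(q, 2^{j+1})$: each such truncation is reduced (same contradiction-with-minimality argument), has weight $\leq |\error|$, and carries syndrome weight at least proportional to the number of error qubits it contains divided by... — more precisely, I would run the argument so that each successive scale forces a multiplicative growth factor $(1 + c/\alpha\delta)$ or similar in the error weight, because the syndrome produced on shell $2^{j+1}$ bounds from below the error present between shells $2^j$ and $2^{j+1}$ by local testability run in reverse. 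Accumulating over $\log r$ scales yields $|\error| \geq (1+c')^{\,\Theta(\log r)} = r^{\,\Theta(1)}$... still polynomial. The honest exponential bound comes from the following sharper step: soundness says the syndrome is a constant fraction of the error weight, so a reduced error confined to a ball of radius $\rho$ has weight at most (roughly) $|\syndrome|/\alpha \cdot$(something), but the syndrome of a connected reduced error confined to a ball of radius $\rho$ and "anchored" away from existing syndrome must itself grow like the ball boundary. The actual mechanism I expect the authors use: a reduced error with covering radius $r$ contains, for each $\rho \le r$, at least one qubit at distance $\rho$ from its syndrome whose local ball $B_T(\cdot,\rho-\text{dist})$ is syndrome-free; applying soundness to nested truncations and using that a syndrome-free ball of the truncated error of radius $s$ forces (by induction / a recursive inequality $N(s) \geq (1+1/(\alpha\delta)) N(s-1)$ on the error weight within radius $s$) exponential growth $N(r) \geq (1+1/(\alpha\delta))^{\,\Omega(r)}$.

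**Main obstacle.** The hard part will be the recursive/inductive step: showing that "$\error$ restricted to a radius-$s$ ball is a reduced error whose syndrome stays inside the ball and avoids the outer sphere" forces the error weight inside radius $s$ to be a strict multiplicative factor larger than inside radius $s-1$. This requires carefully (i) verifying that truncations of a minimum weight error are themselves minimum weight for the appropriate syndrome, handling the checks cut by truncation, and (ii) converting the soundness inequality $|\syndrome| \le \alpha|\error|$, which bounds syndrome by error, into a lower bound on how much *new* error weight a shell must carry to explain the syndrome freshly created on that shell — essentially using soundness contrapositively together with bounded degree to get $|\error \cap \text{shell}| \geq \frac{1}{\alpha\delta}|\error \cap \text{inner ball}|$. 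Pinning down this recursion cleanly, with the right constant $C = C(\alpha,\delta)$, is where all the care goes; the rest (taking logarithms, invoking Proposition~\ref{prop:UF_success}) is routine.
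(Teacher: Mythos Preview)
Your final approach is the paper's: assume $\error$ is reduced, pick $q \in \error$ with $B_T(q, r-1) \cap \syndrome = \emptyset$ for $r = \rcovbar(\syndrome)$, set $\error_k = \error \cap B_T(q,k)$, note each $\error_k$ is reduced, and derive a multiplicative recursion $|\error_{k+2}| \geq (1+\alpha/\delta)\,|\error_k|$ shell by shell (in steps of $2$, using bipartiteness). The detours through the linear and dyadic--polynomial bounds are unnecessary; the paper goes straight to this recursion.

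Two places where your write-up is off and would cause trouble if carried through literally. First, the direction of soundness: what is actually used is $|\syndrome(\error')| \geq \alpha\,|\error'|$ for reduced $\error'$ (small syndrome forces small reduced error --- this is the content of local testability). There is no ``contrapositive'' step; soundness directly gives $|\syndrome_k| \geq \alpha|\error_k|$, and then bounded degree gives $|y_{k+2}| \geq |\syndrome_k|/\delta$ where $y_{k+2} = \error \cap S_T(q,k+2)$. Your growth constant should be $\alpha/\delta$, not $1/(\alpha\delta)$.

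Second, the piece you identify as the main obstacle is resolved in the paper by the structural claim $\syndrome_k \subseteq S_T(q,k+1)$: the syndrome of the truncated error lies \emph{entirely} on the outermost check sphere. This is shown by contradiction --- any syndrome node of $\error_k$ strictly inside $B_T(q,k)$ would persist in $\syndrome_{k+2}, \syndrome_{k+4}, \dots$ and hence in $\syndrome$ itself, contradicting $B_T(q,r-1)\cap\syndrome=\emptyset$. Once you have this, each node of $\syndrome_k$ must be cancelled by some qubit of $y_{k+2}$, which immediately yields $|y_{k+2}|\geq|\syndrome_k|/\delta$. Your observation that truncations of a minimum-weight error remain minimum-weight is correct and is used, but the sphere containment of $\syndrome_k$ is the clean mechanism you are circling around without stating.
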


\begin{proof}
The proof follows the same strategy as the proof of Proposition~\ref{prop:cov_bound_hyperbolic}. 
As we seek to prove a lower bound on the weight of $\error$, we can assume that it is a reduced error, i.e. it has the minimum weight given its syndrome $\syndrome$.
We start from a qubit $q \in \error$ such that 
$
B_T(q, \rcovbar(\syndrome)-1) \cap \syndrome = \emptyset,
$
and we use the lower bound
$
|\error| \geq |\error \cap B_T(q, \rcovbar(\syndrome)-1)|.
$
However, Anderson's theorem cannot be used in this context. We will derive a graphical version of Theorem~\ref{theo:Anderson}.

Let $R = \rcovbar(\syndrome)-1$. For each positive integer $k \leq R$, define $\error_k = \error \cap B_T(q, k)$ and let $\syndrome_k = \syndrome(\error_k)$. We can restrict our attention to even values of $k$ since the Tanner graph is bipartite (and therefore $\error_{2j} = \error_{2j-1}$). 
From the soundness property we have 
\begin{align} \label{eq:soundess_s_lower_bound}
|\sigma_{k}| \geq \alpha |\error_{k}| \cdot
\end{align}
Note that $\error$ having minimum weight given $\syndrome(\error)$ implies that $\error_{k}$ has minimum weight given $\syndrome_{k}$. 
The soundness inequality applies because $|\error_k| \leq |\error| \leq m$ by assumption.
Let $y_k = \error \cap S_T(q, k)$ be the intersection of the error with the sphere with radius $k$. 
By definition, $\error_{k+2}$ is the disjoint union of $\error_{k}$ and $y_{k+2}$. Therefore $|\error_{k+2}|=|\error_{k}|+|y_{k+2}|$. Let us show that if $k \leq R$, then
\begin{align} \label{eq:soundess_y_lower_bound}
|y_{k+2}| \geq \frac{1}{\delta} |\syndrome_{k}|,
\end{align}
where $\delta$ is the degree of the Tanner graph.

To prove Eq.~\eqref{eq:soundess_y_lower_bound}, we will first show that $\syndrome_k$ is included in the sphere $S_T(q, {k+1})$. Clearly, $\syndrome_k = \syndrome(\error_k)$ is included in the ball $B_T(q, {k+1})$ because $\error_k \subset B_T(q, {k})$.
We will now prove by contradiction that there is no vertex $s \in \syndrome_{k}$ included in $B_T(q, {k})$. 
Assume that such a vertex $s \in \syndrome_k \cap B_T(q, {k})$ exists. 
Because all the vertices at distance $k$ from $q$ are qubit vertices, we actually have $s \in \syndrome_k \cap B_T(q, k-1)$.
Consider now the syndrome of $x_{k+2}$.
Given that $x_{k+2}$ is the disjoint union of $x_{k}$ with a set $y_{k+2}$ of vertices at distance $k+2$ from $q$, the vertex $s$ is at distance $\geq 2$ from $y_{k+2}$. This proves that $s$ must belong to $\syndrome_{k+2} = \syndrome(x_{k+2})$.
By induction, $s$ also belongs to $\syndrome$. This is a contradiction since $k \leq R$ and it thus concludes our proof by contradiction.

The next step to proving Eq.~{\eqref{eq:soundess_y_lower_bound}} is to note that for $k \leq R-1$, the fact that any node $s \in \syndrome_k$ is a distance $k+1$ from $q$, implies that $s \not\in \syndrome$.
In order for $s$ to be absent in $\sigma$ but present in $\sigma_k$, there must be at least one qubit in $y_{k+2}$ which is adjacent to $s$ to remove it from the syndrome.
Moreover, each qubit of $y_{k+2}$ is adjacent to at most $\delta$ nodes of $\syndrome_k$. Hence $|y_{k+2}| \geq |\syndrome_k|/\delta$ proving Eq.~{\eqref{eq:soundess_y_lower_bound}}.

Combining ~\eqref{eq:soundess_s_lower_bound} and \eqref{eq:soundess_y_lower_bound}, and using the fact that $|\error_{k+2}|=|\error_{k}|+|y_{k+2}|$, we find
\begin{align} \label{eq:soundness_expansion}
|\error_{k+2}| \geq \left( 1 + \frac{\alpha}{\delta} \right) |\error_k| \cdot
\end{align}
This is valid for all $k \leq R = \rcovbar(\syndrome) - 1$. As a result, we get
$
|\error| \geq |\error_R| \geq B e^{\beta R}
$
for some constants $B, \beta > 0$, proving the Proposition.
\end{proof}

Proposition~\ref{prop:cov_bound_soundness} applies to standard families of quantum LDPC codes such as expander codes because they have $(\alpha >0, \Omega(\sqrt{n}))$-soundness (See Corollary~9 in~\cite{leverrier2015quantum_expander_codes}).

\begin{coro} \label{cor:expander_codes}
The Union-Find decoder (Algorithm~\ref{algo:uf_decoder}) successfully corrects all errors with weight up to $Bn^\beta$ for some constants $B, \beta > 0$
for the quantum expander codes of \cite{leverrier2015quantum_expander_codes}
\end{coro}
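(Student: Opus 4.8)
The plan is to derive Corollary~\ref{cor:expander_codes} as an immediate consequence of Proposition~\ref{prop:UF_success} and Proposition~\ref{prop:cov_bound_soundness}, exactly as Corollary~\ref{cor:hyperbolic_codes} followed from the hyperbolic bound. First I would invoke the result of Leverrier, Tillich and Z\'emor that the quantum expander codes of \cite{leverrier2015quantum_expander_codes} form a family of LDPC codes with bounded-degree Tanner graphs, constant rate, and minimum distance $d = \Theta(\sqrt{n})$, and moreover that they have soundness $(\alpha, m)$ with $\alpha > 0$ a constant independent of $n$ and $m = \Omega(\sqrt{n})$ (Corollary~9 in \cite{leverrier2015quantum_expander_codes}). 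This is the single external input; everything else is bookkeeping.

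Next I would apply Proposition~\ref{prop:cov_bound_soundness} with these parameters: since the Tanner graph has bounded degree and the code has soundness $(\alpha, m)$, every error $\error$ with $|\error| < m$ has syndrome covering radius $\rcovbar(\syndrome) \leq C \log(|\error|)$ for a constant $C$ depending only on $\alpha$ and the degree. This verifies the hypothesis of Proposition~\ref{prop:UF_success} with $w = m = \Omega(\sqrt n)$ and the same constant $C$. Plugging in, Proposition~\ref{prop:UF_success} tells us the Union-Find decoder corrects all errors of weight $|\error| < \min(w, A d^\alpha)$ where $\alpha = \frac{1}{1 + C\log \delta}$ and $A = (2\delta^2)^{-\alpha}$ are positive constants. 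Using $d = \Theta(\sqrt n)$ and $w = \Omega(\sqrt n)$, the quantity $\min(w, A d^\alpha)$ is bounded below by $B n^\beta$ for suitable constants $B, \beta > 0$ (one can take any $\beta \le \min(1/2, \alpha/2)$), which is precisely the claimed bound.

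There is no real obstacle here: the corollary is stated as a straightforward combination, and the only non-routine point is making sure the soundness parameters quoted from \cite{leverrier2015quantum_expander_codes} are stated for the $X$ and $Z$ parts separately (so that the argument, which is run on the Tanner graph of the $Z$ checks, actually applies) and that the bounded-degree hypothesis of both propositions is met. Since expander codes are by construction LDPC, the degree bound is automatic, and the soundness result is symmetric in $X$ and $Z$. Hence, as with Corollary~\ref{cor:hyperbolic_codes}, the statement follows without further argument, and I would present it without a proof.
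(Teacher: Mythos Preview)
Your proposal is correct and follows exactly the approach the paper takes: the paper states the corollary without proof, noting just before it that quantum expander codes have $(\alpha > 0, \Omega(\sqrt{n}))$-soundness by Corollary~9 of \cite{leverrier2015quantum_expander_codes}, so that Proposition~\ref{prop:cov_bound_soundness} and then Proposition~\ref{prop:UF_success} apply. Your write-up is, if anything, more explicit than the paper about the bookkeeping (in particular the role of $d = \Theta(\sqrt{n})$ in turning $\min(w, Ad^\alpha)$ into a $Bn^\beta$ bound), but the substance is identical.
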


Our proof technique leads to a weaker inequality than Proposition~\ref{prop:cov_bound_soundness} for codes that are only logarithmically LDPC and have a logarithmically good soundness such as those defined in \cite{hastings2016quantum} and \cite{leverrier2019towards}.

\section{Correctability bound for toric codes}
\label{sec:toric_codes}

This section focuses on the $D$-dimensional toric codes defined on the torus $\TD = \R^D / (L\Z)^D$ tiled by $L^D$ cubic cells with sides of length one.
We consider the toric code $\TC(D, i)$ obtained by placing qubits on the $i$-cells of $\TD$.
We prove that the Union-Find decoder corrects any error with weight up to $C L^{i/2}$ for some constant $C$.

Since the covering radius of toric codes does not satisfy the requirements of Proposition~\ref{prop:UF_success}, we need to develop a different proof technique. Our argument relies on a bound on the covering radius and the disjointness of the set of logical errors for toric codes \cite{jochym2018disjointness, webster2020universal}.

\subsection{Covering radius of toric codes}

Here we use the same techniques as the bound on the covering radius of hyperbolic codes obtained in Section~\ref{sec:hyp_codes}.

\medskip
The Euclidean analog of Theorem~{\ref{theo:Anderson}} leads to a lower bound on minimum $i$-chains of the form
\begin{align} \label{eq:anderson_euclidean}
\Vol_i(X \cap B_{\R^D}(0, R)) \geq c_i R^{i},
\end{align}
rather than $c_i e^{(i-1)R}$ in the hyperbolic case.
The volume of the Euclidean ball (on the right hand side) is smaller than the volume of the hyperbolic ball with same radius.
Based on Eq.~{\eqref{eq:anderson_euclidean}}, one can establish the Euclidean version of Proposition~{\ref{prop:cov_bound_hyperbolic}}.

\begin{prop} \label{prop:cov_bound_euclidean}
Consider the toric code $\TC(D, i)$ with $i \in \{2, \dots, D-1\}$ and $D \geq 3$ defined on the torus with length $L$.
Let $w = \frac{c_i}{v_i'}(L-1)^{i}$ where $c_i$ is the constant from Eq.~\eqref{eq:anderson_euclidean} and $v_i'$ is the maximum volume of any $i$-cell in the cellulation.
For all error $\error$ with weight $|\error| < w$, the covering radius of the syndrome $\syndrome$ of $\error$ satisfies
$$
\rcovbar(\syndrome) \leq C |\error|^{\frac{1}{i}},
$$
for some constant $C$ that depends only on $D$, $i$, and $v_i'$. 
\end{prop}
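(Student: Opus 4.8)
The plan is to mimic the proof of Proposition~\ref{prop:cov_bound_hyperbolic} almost verbatim, replacing the hyperbolic Anderson bound (Theorem~\ref{theo:Anderson}) with its Euclidean counterpart Eq.~\eqref{eq:anderson_euclidean}, and replacing the injectivity-radius cutoff argument with the observation that on the torus $\TD$ the quotient map $\pi$ is a local homeomorphism on balls of radius up to $L/2$. As in that proof, since we want a lower bound on $|\error|$ we may assume $\error$ is a minimum-weight error for its syndrome $\syndrome$, which is what licenses the use of the minimum-$i$-chain bound.

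The key steps, in order. First, by the definition of the covering radius, $B_T(\syndrome,\rcovbar(\syndrome)-1)$ fails to cover $\error$, so there is a qubit $q\in\error$ with $B_T(q,\rcovbar(\syndrome)-1)\cap\syndrome=\emptyset$; hence $\error\cap B_T(q,\rcovbar(\syndrome)-1)$ is a minimum $i$-chain whose boundary lies on the boundary sphere. Second, pass to a point $q'\in\cM$ inside the cell $q$ and bound $|\error|\geq \frac{1}{v_i'}\Vol_i(\error\cap B_\cM(q',\rcovbar(\syndrome)-1))$ using Eq.~\eqref{eq:cells_vol}, exactly as in Proposition~\ref{prop:cov_bound_hyperbolic} (with a small bookkeeping step relating the combinatorial radius in $T$ to the metric radius in $\cM$ up to a constant factor fixed by the cellulation — this contributes to the constant $C$). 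Third, split on whether $\rcovbar(\syndrome)-1 < L/2$: if so, lift the restricted error through $\pi$ to an $i$-chain in a Euclidean ball of radius $\sim\rcovbar(\syndrome)$ in $\R^D$ and apply Eq.~\eqref{eq:anderson_euclidean} to get $|\error| \geq \frac{c_i}{v_i'}(\rcovbar(\syndrome)-1)^i$, which upon taking $i$-th roots gives $\rcovbar(\syndrome)\leq C|\error|^{1/i}$ as claimed; if instead $\rcovbar(\syndrome)-1\geq L/2$, then truncating at radius $L/2$ already yields $|\error|\geq \frac{c_i}{v_i'}(L/2)^i$, and up to adjusting the constant in $w$ this is $\geq w$, so there is nothing to prove. (One should either state $w$ with an $L/2$ in place of $L-1$, or absorb the discrepancy into $c_i$ or $v_i'$; I would just match whatever makes the cutoff clean.)

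The main obstacle I expect is not the Euclidean Anderson inequality itself (it is the classical isoperimetric-type lower bound on area-minimizing surfaces in $\R^D$, and the paper takes it as given), but rather the bookkeeping that connects the discrete Tanner-graph ball $B_T(q,r)$ to the geometric ball $B_{\R^D}(0,\Theta(r))$ with the correct constants: one must check that a path of $r$ edges in $T$ stays within metric distance $O(r)$ of $q$ in $\cM$ (using $\Vol$ bounds on cells from Eq.~\eqref{eq:cells_vol} and bounded degree), and conversely that all $i$-cells meeting a geometric ball of radius $r-O(1)$ are captured by $B_T(q,r)$. These are routine given Eq.~\eqref{eq:cells_vol} but must be done carefully enough that the final exponent $1/i$ is exact (only the multiplicative constant $C$ is allowed to depend on $D$, $i$, $v_i'$). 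The polynomial-versus-logarithmic gap relative to Proposition~\ref{prop:cov_bound_hyperbolic} is precisely the price of $R^i$ versus $e^{(i-1)R}$, and this is why the toric case later needs the disjointness argument rather than a direct appeal to Proposition~\ref{prop:UF_success}.
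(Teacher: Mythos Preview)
Your proposal is correct and follows essentially the same approach as the paper; in fact the paper does not spell out a proof of this proposition at all, but simply states that ``one can establish the Euclidean version of Proposition~\ref{prop:cov_bound_hyperbolic}'' by replacing Theorem~\ref{theo:Anderson} with Eq.~\eqref{eq:anderson_euclidean}, which is exactly what you do. Your treatment is if anything more careful than the paper's: you correctly flag the $L/2$ injectivity-radius cutoff versus the stated $w=\frac{c_i}{v_i'}(L-1)^i$, and you make explicit the Tanner-graph-versus-metric bookkeeping that the paper (in both the hyperbolic and Euclidean cases) leaves implicit.
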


\medskip
The covering radius of syndromes in toric codes is too large to apply Proposition~\ref{prop:UF_success}. 
In the rest of this section, we present a different argument based on the shape of logical errors.

\subsection{Logical errors for CSS codes}

An error $e_X$ for a CSS code with parity-check matrices $H_X$ and $H_Z$ is called a {\em logical $X$ error} if its syndrome is trivial.
Equivalently, a logical $X$ error is an error of $\Ker H_Z$.
A logical error that is a stabilizer is said to be a {\em trivial logical error}, otherwise it is said to be non-trivial. 
In other words, a non-trivial logical $X$ error is an error of 
$\Ker H_Z \backslash (\Ker H_X)^\perp$.

\medskip
By a {\em basis of logical $X$ errors}, we mean a family of $k$ non-trivial logical errors $\ell_X^{(1)}, \dots, \ell_X^{(k)}$ that generate the quotient space 
$
\Ker H_Z / (\Ker H_X)^\perp
$
which is the space of logical $X$ errors modulo the $X$ stabilizers.
Logical $Z$ errors are defined in the same way by swapping the roles of $X$ and $Z$.

\medskip
We can use a basis of logical errors to check whether a logical error is trivial.

\begin{lemma} \label{lemma:logical_error_test}
Let $\ell_Z^{(1)}, \dots, \ell_Z^{(k)}$ be a basis of logical $Z$ errors.
A logical $X$ error $e_X$ is non-trivial iff for all $j=1, \dots, k$, we have
$(e_X, \ell_Z^{(j)}) = 0 \pmod 2$.
\end{lemma}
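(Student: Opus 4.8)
The plan is to prove the lemma by computing the pairings $(e_X,\ell_Z^{(j)})$ directly and then pinning down which side of the trivial/non-trivial dichotomy the condition ``all pairings vanish'' selects. I will work with the bilinear pairing $(u,v)=\sum_i u_i v_i \pmod 2$ on $\Z_2^n$, under which an $X$ error $e_X$ and a $Z$ error $\ell_Z^{(j)}$ are paired. First I would record the ambient containments that frame the statement: a logical $X$ error satisfies $e_X\in\Ker H_Z$, each $\ell_Z^{(j)}\in\Ker H_X$, and by the definition recalled in the excerpt the classes of $\ell_Z^{(1)},\dots,\ell_Z^{(k)}$ form a basis of the quotient $\Ker H_X/(\Ker H_Z)^\perp$.

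The technical heart is to show that $e_X$ is orthogonal to \emph{all} of $\Ker H_X$ exactly when it is orthogonal to each representative $\ell_Z^{(j)}$. To see this I would take an arbitrary $v\in\Ker H_X$ and write $v=\sum_j \lambda_j \ell_Z^{(j)}+w$ with $w\in(\Ker H_Z)^\perp$, which is legitimate precisely because the $\ell_Z^{(j)}$ represent a basis of the quotient. Since $e_X\in\Ker H_Z$ forces $(e_X,w)=0$, the pairing collapses to $(e_X,v)=\sum_j \lambda_j (e_X,\ell_Z^{(j)})$. Hence $(e_X,\ell_Z^{(j)})=0$ for every $j$ if and only if $(e_X,v)=0$ for every $v\in\Ker H_X$, i.e.\ $e_X\in(\Ker H_X)^\perp$.

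It then remains to convert this orthogonality condition into the triviality condition that the lemma is about. Here I would invoke the definition from Sec.~\ref{sec:qec}: $e_X$ is a stabilizer (a \emph{trivial} logical error) iff $e_X$ lies in the row space of $H_X$, namely $e_X\in(\Ker H_X)^\perp$. Chaining this with the previous step yields the clean conclusion ``all $k$ pairings vanish $\iff e_X\in(\Ker H_X)^\perp \iff e_X$ is trivial,'' and the contrapositive reads ``some pairing $(e_X,\ell_Z^{(j)})$ is nonzero $\iff e_X$ is non-trivial.'' Engaging with the exact wording of the statement, this forces me to flag that the vanishing of all pairings attaches to the \emph{stabilizer} case, so the printed occurrence of ``non-trivial'' on the left of the biconditional is an evident word inversion and must be read as ``trivial''; the later use of the lemma as a test for triviality of a residual error confirms this reading.

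I expect the main obstacle to be bookkeeping rather than depth: one must verify that the pairing $(e_X,\cdot)$ is well defined on classes of the quotient $\Ker H_X/(\Ker H_Z)^\perp$ (again a consequence of $e_X\perp(\Ker H_Z)^\perp$, which itself follows from $e_X\in\Ker H_Z$ via $H_Z e_X^T=0$), and that the decomposition $v=\sum_j \lambda_j\ell_Z^{(j)}+w$ exhausts $\Ker H_X$. Both are immediate from the definition of a basis of logical errors, so once the quotient is handled correctly the biconditional drops out with no further work; the only genuinely substantive point is the word-sense correction noted above, which I would state explicitly so that the proof addresses the statement exactly as printed.
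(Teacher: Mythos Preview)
Your argument is correct and follows the same line as the paper's own proof: both reduce triviality of $e_X$ to membership in $(\Ker H_X)^\perp$ and then test this via orthogonality against the logical $Z$ representatives. You are in fact more careful than the paper, which loosely calls the $\ell_Z^{(j)}$ a ``basis of $\Ker H_X$''; your decomposition $v=\sum_j\lambda_j\ell_Z^{(j)}+w$ with $w\in(\Ker H_Z)^\perp$ and the observation $(e_X,w)=0$ is exactly the missing step, and your diagnosis that the printed ``non-trivial'' should read ``trivial'' is also correct (the paper's own proof actually establishes the ``trivial'' version).
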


\begin{proof}
Since it is a logical operator, we have $e_X \in \Ker H_Z$.
It is trivial iff it is a stabilizer, that is, iff it belongs to $(\Ker H_X)^\perp$.
By definition, $(\Ker H_X)^\perp$ is the orthogonal complement of the space $\Ker H_X$ which is the space of logical $Z$ errors.
Therefore, it is enough to check that $e_X$ is orthogonal with a basis of this space.
\end{proof}

\subsection{Logical errors for toric codes}

Recall that an $X$ error for the toric code $\TC(D, i)$ is an $i$-chain $x$ and its syndrome is the $(i-1)$-chain $\sigma = \partial(x)$.
An error is a stabilizer iff it is the boundary $x = \partial(w)$ of an $(i+1)$-chain $w$.
An $X$ error $x$ is a logical error if its boundary is trivial and this logical error is non-trivial if it is not a boundary. 
In the language of homology theory, a logical $X$ error is a non-trivial $i$-cycle \cite{hatcher2005algebraic_top}.

\medskip
To describe logical errors for the toric code, we introduce some notation. 
Let $e_1, \dots, e_D$ be the vectors of the standard basis of $\R^D$.
Given a point $p \in \TD$ and a set $I \subseteq \{1, \dots, D\}$, define
\begin{align} \label{def:affine_space}
\cA(p, I) 
= \{ p + \sum_{j \in I} \lambda_j e_j \ | \ \lambda_j \in [0, L] \}.
\end{align}
It is an $i$-dimensional section of the torus with $i = |I|$.
In what follows, we denote $I^C = \{ 1, \dots, D \} \backslash I$.
One can check that $\cA(p, I)$ and $\cA(p', I^C)$ intersect at a single point.
For instance, in the 3D torus ${\mathbb T}^3_{L}$ the set $\cA(p, \{1\})$ is a line oriented in the direction of $e_1$ and $\cA(p, \{2, 3\})$ is a plane with direction $e_2, e_3$. The intersection of these two subsets is the point $p$.

\medskip
Any $i$-cell of the torus is of the form 
$
c(p, I) = \{ p + \sum_{j \in I} \lambda_j e_j \ | \ \lambda_j \in [0, 1] \}
$
for some set $I \subseteq \{1, \dots, D\}$ with size $i$ and where $p \in \Z_L^D$.
Define the $i$-chains
\begin{align} \label{def:lX}
\ell_X(p, I) & = \sum_{q \in \cA(p, I) \cap \Z_L^D} c(q, I), \\
\ell_Z(p, I) & = \sum_{q \in \cA(p, I^C) \cap \Z_L^D} c(q, I) \cdot
\end{align}
The $i$-chain $\ell_X(p, I)$ forms a tiling of the $i$-dimensional sheet $\cA(p, I)$ of the torus and $\ell_Z(p, I)$ is the dual of a $(D-i)$-dimensional sheet of the torus.

\medskip
Examining the relative positions of the errors $\ell_X$ and $\ell_Z$, we see that they satisfy the following relations where $p, p' \in \Z_L^D$.
\begin{itemize}
\item $\ell_X(p, I) \cap \ell_X(p', I) \neq \emptyset$ iff $p' \in \cA(p, I)$. Moreover, if this condition is satisfied, we have $\ell_X(p, I) = \ell_X(p', I)$.
\item $\ell_Z(p, I) \cap \ell_Z(p', I) \neq \emptyset$ iff $p' \in \cA(p, I^C)$. Moreover, if this condition is satisfied, we have $\ell_Z(p, I) = \ell_Z(p', I)$.
\item
$|\ell_X(p, I) \cap \ell_Z(p', I)| = 1$.
\end{itemize}

\medskip

\begin{prop} [logical basis] \label{prop:toric_code_log_errors}
For any $p \in \Z_L^D$, the $\binom{D}{i}$ errors $\ell_X(p, I)$ (respectively $\ell_Z(p, I)$) where $I$ is a subset of $\{1, \dots, D\}$ with size $i$, form a basis for the space of logical $X$ errors (respectively logical $Z$ errors).

Moreover, the logical errors $\ell_X(p, I)$ and $\ell_X(p', I)$ (respectively $\ell_Z(p, I)$ and $\ell_Z(p', I)$) are equivalent up to a stabilizer.
\end{prop}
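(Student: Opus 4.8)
The plan is to recognize that the space of logical $X$ errors modulo $X$ stabilizers, namely $\Ker H_Z / (\Ker H_X)^\perp$, is the $i$-th $\Z_2$-homology group $H_i(\TD;\Z_2)$ of the $D$-torus, which by the Künneth formula has dimension $\binom{D}{i}$ over $\Z_2$. Since the subsets $I\subseteq\{1,\dots,D\}$ of size $i$ are exactly $\binom{D}{i}$ in number, it then suffices to prove two things: that each $\ell_X(p,I)$ is a logical $X$ error (an $i$-cycle), and that the family $\{\ell_X(p,I)\}_{|I|=i}$ is linearly independent modulo $X$ stabilizers. Any set of $\binom{D}{i}$ linearly independent vectors in a space of dimension $\binom{D}{i}$ is a basis, so this yields the first assertion; the statement for the $\ell_Z(p,I)$ is identical after passing to the dual cellulation.

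To see that $\partial\ell_X(p,I)=0$ I would use that $\ell_X(p,I)$ tiles the closed $i$-subtorus $\cA(p,I)$: every $(i-1)$-cell lying on the boundary of some $c(q,I)$ with $q\in\cA(p,I)\cap\Z_L^D$ is a common face of exactly two such cells (the torus has no boundary), so it cancels modulo $2$ in $\partial\ell_X(p,I)$. The same bookkeeping in the dual cellulation shows $\ell_Z(p,I)$ is a $Z$-cycle, hence an element of $\Ker H_X$. For linear independence modulo stabilizers I would compute the $\Z_2$ intersection matrix of the two families using the relations listed just before the proposition: $(\ell_X(p,I),\ell_Z(p,J))=1$ when $I=J$ (the third bullet), and $(\ell_X(p,I),\ell_Z(p,J))=0$ when $I\neq J$ because an $i$-cell of ``direction'' $I$ is never an $i$-cell of direction $J$, so the supports are disjoint; thus this matrix is the identity. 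Now suppose $\sum_{I\in S}\ell_X(p,I)$ is an $X$ stabilizer for some nonempty $S$; then it lies in $(\Ker H_X)^\perp$ and must pair to $0$ with the $Z$-cycle $\ell_Z(p,J)$ for every $J$, yet for $J\in S$ that pairing equals $1$ by the identity-matrix computation — a contradiction. (Equivalently, one invokes Lemma~\ref{lemma:logical_error_test} once the $\ell_Z(p,J)$ are known to form a basis.) This establishes the basis claim.

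For the ``moreover'' statement I would show $\ell_X(p,I)$ and $\ell_X(p',I)$ differ by a boundary. If $p'\in\cA(p,I)$ the two chains are equal, so by transitivity it is enough to treat $p'=p+e_j$ with $j\notin I$. The candidate $X$ stabilizer is the thickness-one $(i+1)$-chain $W=\sum_{q\in\cA(p,I)\cap\Z_L^D}c(q,I\cup\{j\})$ obtained by sweeping $\cA(p,I)$ one step in the $e_j$-direction. In $\partial W$, the pairs of faces in the $e_j$-direction contribute precisely $\ell_X(p,I)+\ell_X(p',I)$, while for each $k\in I$ the faces in the $e_k$-direction cancel in pairs modulo $2$ because $\cA(p,I)$ closes up in the $e_k$-direction (translating the index set $\cA(p,I)\cap\Z_L^D$ by $e_k$ returns the same set). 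Hence $\partial W=\ell_X(p,I)+\ell_X(p',I)$, so the two logical errors are equivalent up to a stabilizer; the $Z$ case is the same argument in the dual cellulation.

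The main obstacle is the combinatorial bookkeeping inside the boundary maps — the two-fold covering of $(i-1)$-faces in $\partial\ell_X(p,I)$ and the wrap-around cancellation in $\partial W$ — and, on the conceptual side, being precise about identifying $\Ker H_Z/(\Ker H_X)^\perp$ with $H_i(\TD;\Z_2)$. If one prefers to avoid citing the homology of the torus, the spanning part would instead require a direct argument that every $i$-cycle of the cubical torus is homologous to a $\Z_2$-combination of the coordinate subtori $\cA(p,I)$; this is standard but somewhat longer than the rest of the proof.
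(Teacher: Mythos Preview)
Your proof is correct. The paper's own proof is a one-sentence appeal: it simply observes that logical $X$ errors modulo stabilizers are exactly the $i$-th $\Z_2$-homology classes of the torus and declares the result ``an immediate consequence of the study of the homology of the torus,'' with the $Z$ case handled by duality. Your argument rests on the same identification (you invoke K\"unneth for the dimension $\binom{D}{i}$), but you then carry out explicitly what the paper leaves implicit: you verify that each $\ell_X(p,I)$ is a cycle, you establish linear independence via the intersection pairing with the $\ell_Z(p,J)$ (which is the cellular incarnation of Poincar\'e duality), and for the ``moreover'' clause you construct the explicit $(i{+}1)$-chain $W$ whose boundary realizes the equivalence. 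So the route is the same in spirit; what you gain is a self-contained, constructive verification in the given cubical cellulation rather than a citation, at the cost of some bookkeeping.
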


\begin{proof}
It is an immediate consequence of the study of the homology of the torus. 
This is because non-trivial logical $X$ errors correspond to non-trivial $i$-cycles. The same argument applies for $Z$ errors in the dual cellulation.
\end{proof}

In other words, each set $I$ with size $i$ can be used as the index of a logical error of this basis. 
An $X$ logical error $\ell_X(p, I)$ contains $L^{i}$ cells and there are $L^{D-i}$ disjoint parallel representatives of this logical operator which are equivalent to one another up to a stabilizer.
The $L^{D-i}$ disjoint representatives of $\ell_X(p, I)$ are the errors $\ell_X(q, I)$ with $q \in \cA(p, I^C)$.
A $Z$ logical error $\ell_Z(p, I)$ contains $L^{D-i}$ cells and has $L^{i}$ disjoint parallel representatives.
The $L^{i}$ disjoint representatives of $\ell_Z(p, I)$ are the errors $\ell_Z(q, I)$ with $q \in \cA(p, I)$.

\medskip
We can use this logical basis to detect non-trivial logical errors.

\begin{lemma} \label{lemma:non_trivial_error_test}
If $x$ is a non-trivial logical error for the code $\TC(D, i)$, then there exists a set $I \subseteq \{1, \dots, D\}$ with size $i$ such that 
$
x \cap \ell_Z(p, I) \neq \emptyset
$
for all $p \in \Z_L^D$.
\end{lemma}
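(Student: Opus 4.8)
The plan is to combine the explicit logical basis of Proposition~\ref{prop:toric_code_log_errors} with the orthogonality characterization of trivial logical errors from Lemma~\ref{lemma:logical_error_test}, and then to use the stabilizer-equivalence of parallel sheets to upgrade the conclusion from a single representative to all of them.

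\medskip
First I would fix an arbitrary base point $p_0 \in \Z_L^D$. By Proposition~\ref{prop:toric_code_log_errors}, the $\binom{D}{i}$ chains $\ell_Z(p_0, I)$, indexed by subsets $I \subseteq \{1, \dots, D\}$ of size $i$, form a basis of the space of logical $Z$ errors. Since $x$ is a \emph{non-trivial} logical $X$ error, Lemma~\ref{lemma:logical_error_test} guarantees that $x$ is not orthogonal to this basis, so there exists at least one set $I$ of size $i$ with $(x, \ell_Z(p_0, I)) = 1 \pmod 2$; equivalently $|x \cap \ell_Z(p_0, I)|$ is odd, and in particular nonzero. This $I$ is the one claimed in the statement.

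\medskip
Next, I would argue that for this fixed $I$ the parity of $|x \cap \ell_Z(p, I)|$ is the same for every $p \in \Z_L^D$. By the second assertion of Proposition~\ref{prop:toric_code_log_errors}, for any $p$ the chains $\ell_Z(p, I)$ and $\ell_Z(p_0, I)$ differ by a $Z$-stabilizer $t$, i.e.\ $\ell_Z(p, I) = \ell_Z(p_0, I) + t$ with $t$ in the row space of $H_Z$. Because $x$ is a logical $X$ error its syndrome is trivial, so $x \in \Ker H_Z$ and hence $(x, t) = 0 \pmod 2$ for every $t$ in the row space of $H_Z$. Therefore $(x, \ell_Z(p, I)) = (x, \ell_Z(p_0, I)) = 1 \pmod 2$ for all $p$, which forces $x \cap \ell_Z(p, I) \neq \emptyset$ for all $p \in \Z_L^D$, completing the argument.

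\medskip
I do not anticipate a genuine obstacle here: the only delicate point is the invariance of the intersection parity under the choice of representative sheet, and this is exactly where the hypothesis that $x$ is a logical error (trivial syndrome, $x \in \Ker H_Z$) enters, just as in the proof of Lemma~\ref{lemma:logical_error_test}. For bookkeeping one may also observe that as $p$ ranges over $\Z_L^D$ the chain $\ell_Z(p, I)$ takes exactly the $L^i$ values $\ell_Z(q, I)$ with $q \in \cA(p_0, I)$, since $\ell_Z(p, I)$ depends only on the coordinates of $p$ in the directions indexed by $I$; this matches the ``$L^i$ disjoint representatives'' description preceding the lemma, but is not needed for the proof, because Proposition~\ref{prop:toric_code_log_errors} already provides stabilizer-equivalence for arbitrary pairs $p, p'$.
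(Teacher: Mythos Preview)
Your argument is correct and follows the same route as the paper's proof, which simply says the lemma is the translation of Lemma~\ref{lemma:logical_error_test} into the language of $i$-chains (identifying the binary inner product with the intersection size modulo $2$). Your write-up is in fact more complete: you make explicit the passage from a single basis representative $\ell_Z(p_0, I)$ to all $p \in \Z_L^D$ via the stabilizer-equivalence of Proposition~\ref{prop:toric_code_log_errors} and the fact that $x \in \Ker H_Z$, a step the paper leaves implicit.
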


\begin{proof}
This is the translation of Lemma~\ref{lemma:logical_error_test} in the language of $i$-chains.
The binary inner product between two error vectors corresponds to the intersection size 
$x \cap z \mod 2$ 
of the two $i$-chains $x$ and $z$ representing the errors.
\end{proof}

\subsection{Correctability bound for toric codes}

\begin{prop}
The Union-Find decoder (Algorithm~\ref{algo:uf_decoder}) successfully corrects all errors with weight up to $C L^{i/2}$ for some constant $C$ that depends only on $i$.
\end{prop}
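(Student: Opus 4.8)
The plan is to combine the covering-radius bound of Proposition~\ref{prop:cov_bound_euclidean} with the disjointness of logical representatives from Proposition~\ref{prop:toric_code_log_errors} and the detection criterion of Lemma~\ref{lemma:non_trivial_error_test}. Fix an error $\error$ with $|\error| < C L^{i/2}$ for a constant $C$ to be chosen, and set $\syndrome = \syndrome(\error)$. By Proposition~\ref{prop:cov_bound_euclidean}, provided $|\error| < w = \frac{c_i}{v_i'}(L-1)^i$ (which holds for $L$ large once $C$ is fixed, since $i/2 < i$), the covering radius satisfies $\rcovbar(\syndrome) \leq C' |\error|^{1/i}$ for a constant $C'$. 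As in the proof of Proposition~\ref{prop:UF_success}, after at most $\rcovbar(\syndrome)$ growth steps Algorithm~\ref{algo:uf_decoder} reaches a valid set $\cE$, and the returned correction $\tilde\error$ lies in $\cE \subseteq B_T(\syndrome, \rcovbar(\syndrome))$. Therefore the residual error $\error + \tilde\error$ has trivial syndrome, and—this is the geometric point—it is contained in a union of Tanner-graph balls of radius $\rcovbar(\syndrome)$ centered at the syndrome nodes, hence its support is confined to a geometric region of (cellular) diameter $O(\rcovbar(\syndrome)) = O(|\error|^{1/i})$ around the support of $\error$.

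Next I would bound the diameter of $\error + \tilde\error$ in terms of $|\error|$ alone. Since $|\error + \tilde\error| \leq |\error| + |\cE|$ and $|\cE| \leq |\syndrome|\,\delta^{1+\rcovbar(\syndrome)}$ with $|\syndrome| \leq \delta|\error|$, the set $\error + \tilde\error$ consists of at most $\delta^2 |\error|^{1+C'\log\delta}$ cells, but more importantly its support is covered by the $O(|\error|)$ balls of radius $O(|\error|^{1/i})$, so in the Euclidean metric of $\TD$ it is contained in a region of diameter at most $D_0 := O\!\left(|\error| + |\error|^{1/i}\right) = O(|\error|)$ cells (each ball contributes radius $O(|\error|^{1/i}) \le O(|\error|)$, and translating between balls costs at most $|\error|$ more since consecutive syndrome cells within a connected component are within one qubit of each other along the error support). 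The key consequence: for a logical representative $\ell_Z(p,I)$ with $p$ chosen so that the sheet $\cA(p, I^C)$ is disjoint from this diameter-$D_0$ region, Lemma~\ref{lemma:non_trivial_error_test} would be violated, so $\error + \tilde\error$ must be trivial.

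To make this precise, I would argue by contradiction: suppose $\error + \tilde\error$ is a non-trivial logical error. By Lemma~\ref{lemma:non_trivial_error_test} there is a size-$i$ set $I$ with $(\error+\tilde\error) \cap \ell_Z(p,I) \neq \emptyset$ for \emph{every} $p \in \Z_L^D$. But the $L^i$ disjoint representatives $\ell_Z(q, I)$ for $q \in \cA(p,I)$ partition (the dual sheets of) the torus into $L^i$ parallel copies, each a translate of the others; the support of $\error + \tilde\error$, having Euclidean diameter $\leq D_0 = O(|\error|)$, can meet at most $O(D_0) = O(|\error|)$ of these $L^i$ parallel representatives (it meets a representative only if the representative passes within the diameter-$D_0$ region). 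Hence if $O(|\error|) < L^i$—equivalently $|\error| < c\, L^i$ for suitable $c$—there exists $q$ with $(\error+\tilde\error)\cap \ell_Z(q,I) = \emptyset$, contradicting the conclusion of Lemma~\ref{lemma:non_trivial_error_test}. This forces $\error + \tilde\error$ to be a stabilizer, i.e. the decoder succeeds. Since $|\error| < CL^{i/2} \le CL^{i-1} < cL^i$ for $L$ large, the hypothesis is met.

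The main obstacle is the second paragraph: controlling the \emph{Euclidean diameter} (not merely the cardinality) of the residual error $\error + \tilde\error$, and showing it grows strictly slower than $L^i$. The subtlety is that $\cE$ is grown in the Tanner graph whose balls have radius $\rcovbar(\syndrome)$, and one must verify that (i) $\rcovbar(\syndrome) = O(|\error|^{1/i})$ via Proposition~\ref{prop:cov_bound_euclidean}, (ii) the $O(|\error|)$ syndrome components are themselves geometrically clustered within the support of $\error$ so that the whole of $\cE$ sits in a region of diameter $O(|\error|) \ll L^i$, and (iii) a set of diameter $\Delta$ meets only $O(\Delta)$ of the $L^i$ parallel logical $Z$ representatives in each of the $\binom{D}{i}$ directions. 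Point (ii) is where one must be a little careful: a priori the error support could be spread across the torus, but if it has two pieces at Euclidean distance $> 2\rcovbar(\syndrome)$ the corresponding clusters of $\cE$ are disjoint and can be handled independently, so it suffices to bound the diameter of each connected piece, which is at most $|\error|$. Assembling these estimates and choosing $C$ so that $CL^{i/2}$ stays below both $w$ and $cL^i$ completes the argument; the exponent $i/2$ (rather than the naive $i$) presumably arises from a sharper accounting that I would need to track through, likely balancing the $|\error|^{1/i}$ covering-radius growth against the number of parallel representatives in the worst direction.
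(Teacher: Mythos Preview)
Your overall strategy---combine Proposition~\ref{prop:cov_bound_euclidean} with the disjoint logical representatives of Proposition~\ref{prop:toric_code_log_errors} and the detection criterion of Lemma~\ref{lemma:non_trivial_error_test}---matches the paper's. But the execution has a genuine gap, and you essentially concede it in your last sentence: you never actually derive the exponent $i/2$.

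The problem is your point (iii). The $L^i$ disjoint representatives $\ell_Z(q,I)$ are indexed by the $I$-projection $\pi_I(q)\in\Z_L^I\cong\Z_L^i$, so a set of Euclidean diameter $\Delta$ meets $O(\Delta^i)$ of them, not $O(\Delta)$. With the corrected count, your diameter argument yields only $|\error|^i \lesssim L^i$, i.e.\ $|\error|\lesssim L$, which is strictly weaker than $L^{i/2}$ for $i\geq 3$ (and merely matches it for $i=2$). The diameter of $\cE$ is simply too coarse an invariant: a single connected piece of $\error$ of size $|\error|$ already forces diameter $\sim|\error|$, and there is no way to recover the square-root from that.

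What the paper does instead is bypass diameter entirely and count representatives directly. Write $n_I(u)$ for the number of distinct $\ell_Z(q,I)$ meeting $u$. Trivially $n_I(\error)\leq|\error|$, and by a union bound $n_I(\error+\tilde\error)\leq n_I(\error)+n_I(\tilde\error)$. The key estimate is on $n_I(\tilde\error)$: since $\tilde\error\subseteq\cE\subseteq\bigcup_{s\in\syndrome}B_T(s,\rcovbar(\syndrome))$ and each Tanner ball of radius $R$ meets $O(R^i)$ representatives (this is the $\Delta^i$ count applied to a single ball), one gets
\[
n_I(\tilde\error)\;\leq\;|\syndrome|\cdot O\bigl(\rcovbar(\syndrome)^i\bigr)\;\leq\;O(|\error|)\cdot O(|\error|)\;=\;O(|\error|^2),
\]
using $|\syndrome|\leq\delta|\error|$ and Proposition~\ref{prop:cov_bound_euclidean}. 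If the decoder fails, Lemma~\ref{lemma:non_trivial_error_test} gives $n_I(\error+\tilde\error)=L^i$ for some $I$, hence $L^i\leq |\error|+O(|\error|^2)$ and $|\error|\gtrsim L^{i/2}$. The $i/2$ arises precisely from the product of $|\syndrome|\sim|\error|$ many balls each contributing $\rcovbar^i\sim|\error|$ representatives; your diameter approach collapses these two factors into one and loses the square.
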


\begin{proof}
Assume that the decoder fails for an error $\error$ and consider the residual error after correction $\tilde \error + \error$ where $\tilde \error$ is the proposed correction returned by the decoder.

For an error $u$, denote $n_I(u)$ the number of disjoint logical errors $\ell_Z(q, I)$ with varying $q \in \Z_L^D$ that meet $u$.
By definition of $n_I$, we have 
\begin{align}
|\error| \geq n_I(\error).
\end{align}
In what follows, we will derive a lower bound on $n_I(\error)$.

Consider first $n_I(y)$ where $y = \tilde \error + \error$.
Given that the decoder fails, the residual error $y$ is a non-trivial logical $X$ error.
Based on Lemma~\ref{lemma:non_trivial_error_test}, there exists a set $I$ with size $i$ such that each of the $L^i$ distinct logical errors 
$
\ell_Z(q, I)
$
with $q \in \cA(p, I)$ meets $y$. 
This shows that $n_I(y) = L^i$.
Then, applying a union bound, we get
$
n_I(y) \leq n_I(\error) + n_I(\tilde \error)
$
which leads to
\begin{align}
n_I(\error) \geq L^i - n_I(\tilde \error) \cdot
\end{align}

We will now establish an upper bound on $n_I(\tilde \error)$.
The correction $\tilde \error$ is included in the set $\varepsilon$ grown by the decoder, which implies that 
$
n_I(\tilde \error) \leq n_I(\varepsilon).
$
Let us bound $n_I(\varepsilon)$.
We know that $\varepsilon \subseteq B_T(\syndrome, \rcovbar(\syndrome))$ where $\syndrome$ is the syndrome of $\error$.
By a union bound, this leads to
\begin{align}
n_I(\varepsilon) \leq \sum_{s \in \syndrome} n_I(B_T(s, \rcovbar(\syndrome))).
\end{align}
The $i$-cells of the ball $B_T(s, R)$ of the Tanner graph are included in the ball $B_{\TD}(p_s, b_i(R+1))$ of the torus where 
$b_i = \sqrt{i}$ and where $p_s$ is a point of $\Z_L^D$ that belongs to the cell $s$.
This is because the diameter of an $i$-cell is $b_i$.
Combining this with Lemma~\ref{lemma:intersection_bound}, we get
$$
n_I(B_T(s, R)) 
\leq n_I(B_{\TD}(p_s, b_i (R+1))) 
\leq  i^{i/2} a_i (R+1)^i.
$$
Overall, we proved that
\begin{align}
n_I(\tilde \error)
\leq
n_I(\varepsilon) 
\leq 
|\syndrome|  i^{i/2} a_i(\rcovbar(\syndrome)+1)^i,
\end{align}
for some constant $a_i$.

Putting everything together, we showed that
\begin{align} \label{eq:proof_eucl_error_bound}
|\error| 
\geq 
L^i - |\syndrome| i^{i/2} a_i (R+1)^i 
= 
L^i - |\syndrome| B \rcovbar(\syndrome)^i,
\end{align}
for some constant $B$ that depends only on $i$.
Injecting the bound on the covering radius from Proposition~\eqref{prop:cov_bound_euclidean} in this equation, we find
\begin{align} \label{eq:proof_eucl_error_bound_2}
|\error|
& \geq L^i - C |\error|^2,
\end{align}
for some constant $C$ that depends only on $i$.
To bound the syndrome weight, we used the bound 
$|\syndrome| \leq \delta |\error|$ where $\delta$ is the degree of the Tanner graph. The constant $\delta$ depends only on $i$.

Eq.~\eqref{eq:proof_eucl_error_bound_2} provides a lower bound on the weight of errors that lead to a decoder failure.
If an error is uncorrectable, then its weight $w$ must satisfy
$
w + Cw^2 \geq L^i.
$
Consider an error with weight $w \leq \sqrt{1/(C+1)} L^{i/2}$. 
Then, we have 
$
w + Cw^2 \leq (1+C)w^2 < L^i,
$
violating the uncorrecability condition.
This proves that any error with weight $w \leq \sqrt{1/(C+1)} L^{i/2}$ is corrected successfully by the Union-Find decoder.
\end{proof}

\begin{lemma} \label{lemma:intersection_bound}
Let $I$ be a fixed subset of $\{1, \dots, D\}$ with size $i$.
Let $b = B(p_0, R)$ be a ball with center $p_0 \in \Z_L^D$ and radius $R < L$ in $\TD$. 
Denote $n_I(b)$ the number of distinct errors $\ell_Z(q, I)$ with varying $q \in \Z_L^D$ that meet $b$.
We have 
$$
n_I(b) \geq a_i R^{i},
$$
for some constant $a_i$.
\end{lemma}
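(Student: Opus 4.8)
The plan is to collapse the statement to a lattice-point count inside the $i$-dimensional sub-torus cut out by the coordinates in $I$. First I would recall the intersection relations recorded earlier in this section: for $q, q' \in \Z_L^D$ one has $\ell_Z(q, I) = \ell_Z(q', I)$ precisely when $q' \in \cA(q, I^C)$, that is, precisely when $q$ and $q'$ agree on every coordinate indexed by $I$. Hence the distinct errors $\ell_Z(\cdot, I)$ are in bijection with the vectors $\bar q \in \Z_L^{I}$ obtained by keeping only the coordinates in $I$, and $n_I(b)$ counts those $\bar q$ for which $\ell_Z(q, I)$ meets $b = B_{\TD}(p_0, R)$. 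Writing $\bar p_0 = (p_0)_I$, the target is reduced to showing that $\ell_Z(q, I)$ meets $b$ whenever $\bar q$ is close enough to $\bar p_0$ in the torus $\R^{I}/L\Z^{I}$.

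The second step is the elementary geometric estimate. As a region of $\TD$, $\ell_Z(q,I)$ is the union of the cells $c(q', I)$ over the lattice points $q'$ with $q'_I = q_I$, and $c(q', I) = \{\, x \in \TD : x_I \in q_I + [0,1]^{I},\ x_{I^C} = q'_{I^C} \,\}$; so the $I$-coordinates of $\ell_Z(q,I)$ stay in the unit box $q_I + [0,1]^{I}$ while its $I^C$-coordinates take \emph{every} value in $\Z_L^{I^C}$. Given $\bar q$, I would choose $q'$ with $q'_I = q_I$ and $q'_{I^C} = (p_0)_{I^C}$ (legitimate since $p_0 \in \Z_L^D$), and then $x \in c(q', I) \subseteq \ell_Z(q,I)$ with $x_I$ the point of $q_I + [0,1]^{I}$ closest to $\bar p_0$. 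Since the $I^C$-coordinates match, $\|x - p_0\| = \|x_I - \bar p_0\| \le \|q_I - \bar p_0\|$, so $\ell_Z(q,I)$ meets $b$ as soon as the distance from $\bar q$ to $\bar p_0$ in $\R^{I}/L\Z^{I}$ is at most $R$.

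It then remains to count and to dispose of small $R$. The axis-aligned cube of side $R/\sqrt{i}$ centred at $\bar p_0$ has circumradius $R/2 \le R$, hence lies inside the ball of radius $R$ around $\bar p_0$ in $\R^{I}/L\Z^{I}$; since $R/\sqrt{i} < L$ this cube embeds injectively, and for $R \ge 2\sqrt{i}$ it contains at least $(R/(2\sqrt{i}))^{i}$ points of $\Z_L^{I}$, each of which is a $\bar q$ with $\ell_Z(q,I)$ meeting $b$. For $R < 2\sqrt{i}$ the single error $\ell_Z(p_0,I)$ already meets $b$, as $p_0$ is a corner of $c(p_0,I) \subseteq \ell_Z(p_0,I)$, so $n_I(b) \ge 1 \ge (R/(2\sqrt{i}))^{i}$; thus $n_I(b) \ge a_i R^{i}$ with $a_i = (2\sqrt{i})^{-i}$ in all cases. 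The only step needing genuine care is the coordinate bookkeeping in the second paragraph — in particular the observation that the $I^C$-coordinates of $\ell_Z(q,I)$ realise every lattice value, so that reaching the ball is free in those $D-i$ directions and the problem really does reduce to counting lattice points in the $i$ directions of $I$; the matching upper bound $n_I(b) \le a_i' R^{i}$ that is actually invoked in the proof of the correctability bound comes out of the same picture by projecting any point of $\ell_Z(q,I) \cap b$ onto the coordinates in $I$.
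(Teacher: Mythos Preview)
Your observation in the last sentence is the crucial one: the inequality in the lemma statement points the wrong way. The correctability proof that invokes this lemma uses it in the chain
\[
n_I\big(B_T(s,R)\big)\ \le\ n_I\big(B_{\TD}(p_s, b_i(R+1))\big)\ \le\ i^{i/2} a_i (R+1)^i,
\]
so what is actually needed is the \emph{upper} bound $n_I(b)\le a_i R^{i}$, not the lower bound stated. You spotted this and flagged it.

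Your argument for the lower bound (the statement as literally written) is correct and clean: parametrising the distinct $\ell_Z(\cdot,I)$ by $\bar q\in\Z_L^{I}$, then showing that whenever $\bar q$ lies in the $i$-dimensional ball of radius $R$ around $\bar p_0$ the error $\ell_Z(q,I)$ reaches $b$ because the $I^C$-coordinates are free, and finally doing a standard lattice-point count. Nothing is missing there.

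For comparison, the paper's own proof --- despite also writing $\ge$ in its final displayed chain, which is the same typo --- is in fact an argument for the \emph{upper} bound. It uses the orthogonal projection $\pi$ onto $\cA(0,I)$: if $\ell_Z(q,I)$ meets $b$ at a cell $c(q',I)$, then the projected cell $c(\pi(q'),I)$ lies in $\ell_X(0,I)\cap\ell_Z(q,I)$ and still meets $b$; since distinct $\ell_Z(\cdot,I)$ are pairwise disjoint, this gives an injection from the $\ell_Z$'s meeting $b$ into the cells of $\ell_X(0,I)$ meeting $b$, hence $n_I(b)\le |\ell_X(0,I)\cap b|\le C R^{i}$. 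This is exactly the projection you sketch in your closing sentence, so your route to the bound that is actually used coincides with the paper's. The extra content you supply is a self-contained proof of the (unused) lower bound, which the paper does not give.
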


\begin{proof}
By transitivity of the torus' cellulation, we can assume that $p_0 = 0 \in \Z_L^D$.
Let $\ell_Z(q, I)$ be an error that meets $b$ and let $c(q', I)$ be a $i$-cell of this intersection. 

We will make use of the orthogonal projection $\pi$ that maps a point
$
\sum_{j=1}^D \lambda_j e_j
$ 
of the ball onto the point
$
\sum_{j \in I} \lambda_j e_j.
$
The projection $\pi$ maps a point onto $\cA(0, I)$ and it kills its component in $\cA(0, I^C)$.

The image of the cell $c(q', I)$ under $\pi$ is the cell $c(\pi(q'), I)$ and it satisfies
\begin{align*}
c(\pi(q'), I) \in \ell_X(0, I) \cap \ell_Z(q, I) \cdot
\end{align*}
Indeed, $c(\pi(q'), I)$ is included in $\ell_X(0, I)$ because $\pi(q') \in \cA(0, I)$, and it belongs to $\ell_Z(q', I)$ because $q'$ and $\pi(q')$ differ by a vector of $\cA(0, I^C)$.

We proved that if $\ell_Z(q, I)$ meets the ball $b$, it must also meet $\ell_X(0, I)$ inside this ball. 
Moreover, we know that two distinct errors $\ell_Z(q, I)$ and $\ell_Z(q', I)$ do not intersect.
This yields
\begin{align*}
n_I(b) \geq |\ell_X(0, I) \cap b| \geq C R^i,
\end{align*}
for some constant $C$.
Therein, the last inequality is given by the size of a $i$-dimensional ball (it is the intersection of the $i$-dimensional space with a ball of radius $R$).
\end{proof}

\section{Numerical results}
\label{sec:numerics}

In previous sections we have shown that our Union-Find decoder performs well for a variety of code families against adversarial noise by proving it corrects arbitrary low-weight errors. 
In this section we provide numerical evidence that our Union-Find decoder also performs well in the setting of independent Pauli noise. 
In particular, we simulate a 4D hyperbolic topological code which encodes 736 logical qubits in 3600 qubits and compare the logical failure probability when decoded with our Union-Find decoder with that obtained using a BP decoder as in \cite{breuckmann2020single_shot_4D_codes}. 
We refer to Section III  of \cite{breuckmann2020single_shot_4D_codes} for a comprehensive description of the construction of this quantum code and other instances that belong to the family of 4-dimensional hyperbolic codes. The quantum code we use in this paper corresponds to the $4^{th}$ row of Table 1 of \cite{breuckmann2020single_shot_4D_codes}.
In \cite{breuckmann2020single_shot_4D_codes}, the BP decoder exhibited a high threshold and good single-shot behavior. 
In this work we investigate the low error rate regime and show numerically that the Union-Find decoder outperforms the BP decoder for error rates less than $5 \cdot 10^{-4}$; see Fig.~\ref{fig:numericsHyperbolic}.
The details of the BP decoder are described in Appendix \ref{appendix}.

\medskip
We consider the independent Pauli noise model with perfect syndrome extraction. Since the code we consider is symmetric with respect to X and Z errors, we only consider Z errors.

\begin{figure}[H]
\centering
\includegraphics[width=0.6\textwidth]{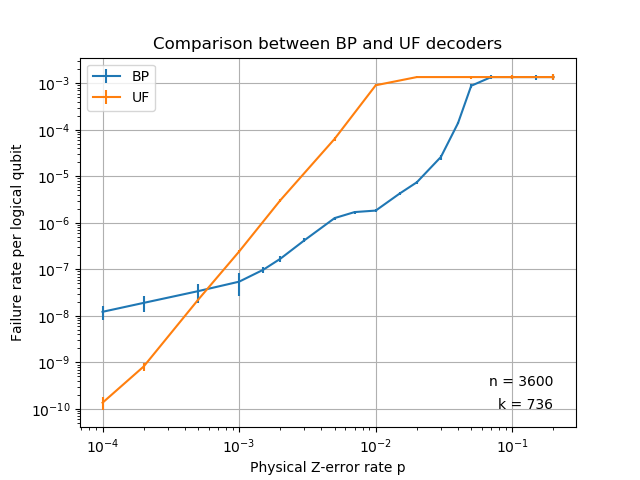}
\caption{The logical failure probabilities per logical qubit of a 4D hyperbolic topological code when decoding with a BP decoder (blue) and the Union-Find decoder (orange).
The code instance used encodes 736 logical qubits in 3600 qubits and is specified in detail in \cite{breuckmann2020single_shot_4D_codes}.}
\label{fig:numericsHyperbolic}
\end{figure}

\medskip
In Fig.~\ref{fig:numericsHyperbolic} we divide the logical error probability by the number of logical qubits (736 in our case). 
This allows one to meaningfully compare the performance of different codes which encode different numbers of logical qubits.
The justification for this normalization factor can be understood in the scenario of a choice between storing $k$ logical qubits in a set of $k$ copies of a code which encodes a single logical qubit and a single code that encodes $k$ logical qubits.
The probability $p_{\text{success}}$ of no failures occurring in any of the $k$ copies of a code which each fail independently with probability $p_\text{L}$ satisfies:
$$
p_{\text{success}} = 1 - (1 - p_\text{L})^k \approx k p_\text{L},
$$  
where the approximation holds for small $p_{\text{success}}$.
Dividing by $k$ therefore yields the logical error rate that a code with one logical qubit would need to perform the same computation.

\medskip
At low error rates, the Union-Find decoder exhibits a slope that is greater than 3 on the log-log plot of logical error rate versus physical error rate. This is a strong indication that the effective minimum distance of this code equipped with the Union-Find decoder is at least 5. Indeed if all errors of weight less than or equal to $\lfloor \frac{d - 1}{2} \rfloor$ are corrected, the dominant source of errors at low error rates comes from errors of weight $\lfloor \frac{d - 1}{2} \rfloor + 1$. Hence the slope of log-log plot. 

\medskip
Interestingly the BP decoder doesn't exhibit the same behavior at low error rates.
Moreover the non-increasing behavior of the derivative of the performance curve of the BP decoder is surprising at first sight. A possible explanation comes from the use of the physical error rate as an input of the BP decoder. Therefore it is possible that a given error is succesfully corrected at low error rates but leads to a failure at a higher error rate.
The fact that the BP decoder performs poorly with low errors that cover part of a check was already observed in \cite{poulin2008iterative} with a different variant of BP decoder.

\medskip
At sufficiently low error rates, the Union-Find decoder performs better than the BP decoder. 
The crossing point is around $5 \cdot 10^{-4}$ for the code we simulated. These numerical results are consistent with our theoretical analysis which showed that errors of weight less than a power of the minimum distance are always corrected by the Union-Find decoder.

\medskip
The failures of the BP decoder are of two kinds. The first kind consists of logical errors. The second kind happens when the decoder doesn't find any error corresponding to the observed syndrome. The second kind of failure is less problematic for error correction since the decoder knows that it is failing : we can think of them as flagged errors. Interestingly, in our simulations, most of the BP decoder failures are flagged errors. A straightforward combination of the BP and Union-Find decoders consists in running the BP decoder first and running the Union-Find decoder when the BP decoder outputs a flagged error. We expect such a combination to give a lower failure rate than any of the two decoders alone and leave a numerical investigation for future work.

\section{Conclusion}

We described a generalization of the Union-Find decoder and we proved that it corrects any error with weight up to $A n^\alpha$ for some constant $A$, for three standard classes of quantum LDPC codes.
The main difference with the standard Union-Find decoder is that we keep growing valid clusters. This is because these clusters help linking invalid clusters, making them valid. As a result the final grown set of qubits may be smaller in the case of expander codes or hyperbolic codes for which the size of a ball of the Tanner graph grows exponentially fast with the radius.

\medskip
Our decoder could also be applied to the recently proposed fiber bundle LDPC codes~\cite{hastings2020fiber, breuckmann2020balanced} and the product codes of~\cite{panteleev2020quantum} but we did not investigate its performance for these codes.
The generalization of the Union-Find decoder to the homological product of a topological codes and a small block code was considered in \cite{delfosse2020UFproduct}.

\medskip
It would be interesting to explore different variants of the growth. In the case of 2D surface codes, growing one cluster at a time and growing the smallest cluster first is beneficial~\cite{delfosse2017UF_decoder}. One can also consider growing a single qubit at a time~\cite{Hu2020quasi} or taking into account edge weights to orient the growth~\cite{huang2020fault}.

\medskip
We compared numerically the performance of our Union-Find decoder the BP decoder. In the future, it would valuable to explore numerically the performance of our decoder for different families of LDPC codes and to compare the results with other variants of the BP decoders \cite{panteleev2019BPOSD, roffe2020decoding, grospellier2018numerical, grospellier2020combining} or the recent linear programming decoder of \cite{fawzi2021LP}.

\medskip
In this work, we have focused on the error correcting performance of the decoder and while our algorithm runs in polynomial time, we leave the question of optimizing its computational complexity for later. 
The bottleneck of our implementation is the use of Gaussian elimination for the component validity subroutine and the component correction subroutine.
Like for the standard Union-Find decoder,
it may be possible to achieve an almost-linear complexity by adapting the peeling decoder~\cite{delfosse2020peeling}.

\medskip
We leave the question of proving the existence of the threshold open. 
We believe that our decoder could be made fault-tolerant for quantum LDPC codes and especially locally testable codes which are naturally robust against syndrome measurement noise \cite{campbell2019single_shot}.

\section{Acknowledgments}

We would like to thank Nikolas Breuckmann for providing the hyperbolic topological code used for numerics and Jeongwan Haah and Anthony Leverrier for their comments on a preliminary version of this article.


\section{Appendix: Belief Propagation decoder}
\label{appendix}

The basic idea of the standard BP decoder \cite{gallager1962LDPC} is to estimate the marginal error probability of a qubit using the value of the syndrome over all the checks at distance less than $2R$ in the Tanner graph \cite{richardson2008modern}. 

\medskip
This marginal error probability for a qubit can be estimated efficiently through rounds of message passing during which each node of the Tanner graph exchanges information with its neighbors. After $R$ rounds of message passing (a rounds consists of messages from qubits to checks and messages from checks to qubits, which explains the factor $2$ discrepancy), each qubit has accumulated data from all the checks at distance less than $2R$ from itself. Based on this information, one can determine if a qubit is more likely to support an error or not.

\medskip
The radius $2R$ must be optimized for each error correcting code and for each physical error rate probability, which is tedious and can lead to unfair comparisons. In this work, we consider the variant of the BP decoder used in \cite{breuckmann2020single_shot_4D_codes} which does not require to tune the parameter $R$.

\begin{algorithm}
	\caption{BP with $R$ rounds}
	\label{algo:bp_decoder_R_rounds}
	\vspace{0.2em}
	\textbf{Input:} The syndrome $\syndrome = \syndrome(\error) \subseteq \vc$ of an error $x \subseteq \vq$. The physical error rate $p$. \\
	\textbf{Output:} An estimation $\tilde \error \subseteq \vq$ of $\error$.
	\begin{algorithmic}[1]
		\STATE \underline{Initialization:}
		\STATE Initialize $\tilde \error = \varnothing$
		\STATE For each pair $(q, c)$ of a qubit and a check that are adjacent do:
		\STATE \hspace{1cm} Initialize a value from qubit to check: $\ell_{q \rightarrow c} =\log \frac{1-p}{p} $
		\STATE \underline{$(R-1)$ first rounds:}
		\STATE For $R-1$ rounds do:
		\STATE \hspace{1cm} For each pair $(q, c)$ of a qubit and a check that are adjacent do:
		\STATE \hspace{2cm} a = 0
		\STATE \hspace{2cm} For each check $c_2$ adjacent to $q$ and different from $c$ do:
		\STATE \hspace{3cm} $b = 1$
		\STATE \hspace{3cm} For each qubit $q_2$ adjacent to $c_2$ and different from $q$ do:
		\STATE \hspace{4cm} $b \leftarrow b * \tanh( \ell_{q_2 \rightarrow c_2} )$
		\STATE \hspace{3cm} Value from check to qubit: $ \ell_{c_2 \rightarrow q} = \frac{(-1)^{c_2 \in \sigma}}{2} \atanh( b ) $
		\STATE \hspace{2cm} $a \leftarrow a + \ell_{c_2 \rightarrow q}$
		\STATE \hspace{2cm} Value from qubit to check: $\ell_{q \rightarrow c} = \log \frac{1-p}{p}+a$.
		\STATE \underline{Last round:}
		\STATE For each qubit $q$ do:
		\STATE \hspace{1cm} a = 0
		\STATE \hspace{1cm} For each check $c_2$ adjacent to $q$ do:
		\STATE \hspace{2cm} $b = 1$
		\STATE \hspace{2cm} For each qubit $q_2$ adjacent to $c_2$ and different from $q$ do:
		\STATE \hspace{3cm} $b \leftarrow b * \tanh( \ell_{q_2 \rightarrow c_2} )$
		\STATE \hspace{2cm} Value from check to qubit: $ \ell_{c_2 \rightarrow q} = \frac{(-1)^{c_2 \in \sigma}}{2} \atanh( b ) $
		\STATE \hspace{1cm} $a \leftarrow a + \ell_{c_2 \rightarrow q}$
		\STATE \hspace{1cm} Value at qubit: $\ell_{qb} = \log \frac{1-p}{p}+a$.
		\STATE \hspace{1cm} If $\ell_{qb} < 0$ do:
		\STATE \hspace{2cm} $\tilde \error = \tilde \error \cup {q}$
		\STATE Return $\tilde \error$
	\end{algorithmic}
\end{algorithm}

\medskip
Note that for better numerical stability we use logarithmic ratio of probabilities instead of probabilities :
\begin{align} \label{log_ratio_prob}
\ell(p)
= 
\log \frac {1-p} {p}
\end{align}

It is simpler to have no tuning and it was observed in \cite{breuckmann2020single_shot_4D_codes} that this variant of BP performs similarly to the standard PB. To achieve this goal we use the same approach as in \cite{breuckmann2020single_shot_4D_codes} and try every possible number of rounds starting with 1 round and incrementing this number by 1 at each step. Note that the computational overhead of this approach is reasonable since it is possible to re-use the $(R-1)$ first rounds of BP with $R$ rounds as the $(R-1)$ first rounds of BP with $(R+1)$ rounds. The termination condition depends on $\sigma_{res}^{(R)}$: the syndrome of the residual error. The residual error is the symmetric difference of the estimation of the error after $R$ rounds and the actual error:
\begin{align} \label{sigma_res}
\sigma_{res}^{(R)}
= 
\sigma (\tilde \error^{(R)} \, \Delta \, \error)
\end{align}
Tuning-free BP terminates after $R$ rounds if one of the two following conditions is met:
\begin{itemize}
\item $\sigma_{res}^{(R)} = \varnothing$
\item $|\sigma_{res}^{(R)}| \geq |\sigma_{res}^{(R-1)}|$
\end{itemize}

In the first case, we call a successful decoding if the residual error is not a logical error and a failure otherwise. In the second case, we call a failure. Termination of the tuning-free belief-propagation decoder is guaranteed after a number of rounds which is at most the weight of the initial syndrome $\sigma$.

\end{document}